\newtheorem{theorem}{Theorem}
\newtheorem{proposition}[theorem]{Proposition}
\newtheorem{corollary}[theorem]{Corollary}
\newtheorem{lemma}[theorem]{Lemma}
\theoremstyle{definition}
\newtheorem{definition}[theorem]{Definition}
\theoremstyle{remark}
\newtheorem{example}[theorem]{Example}
\newcommand{\N}{\mathbb{N}}
\newcommand{\nl}{\medskip\noindent}
\newcommand{\Prime}{\texttt{Prime}}
\newcommand{\LimPrime}{\texttt{LimPrime}}
\newcommand{\Div}{\texttt{Div}}
\newcommand{\Dom}{\texttt{Dom}}
\newcommand{\LCM}{\texttt{LCM}}
\newcommand{\lcm}{\text{lcm}}
\newcommand{\MULT}{\texttt{MULT}}
\newcommand{\TERM}{\texttt{TERM}}
\newcommand{\EQ}{\texttt{EQ}}
\newcommand \str[1] {$\langle  {#1} \rangle$}
\newcommand \strgen{$\langle \omega^{\omega^\lambda}; \times \rangle$}
\begin{document}
\title{ Complexity and (un)decidability of fragments of \strgen}

\author{Alexis B\`{e}s\\
	Universit\'e Paris-Est, LACL (EA 4219), UPEC, Cr\'eteil, France \\
	\and Christian Choffrut\\
	IRIF, CNRS and Universit\'e Paris 7 Denis Diderot, 
	France} 

\maketitle

\begin{abstract}
	We specify the frontier of decidability for fragments of the first-order theory of ordinal multiplication. We give a {NEXPTIME} lower bound for  the 
	complexity of the existential fragment of $\langle \omega^{\omega^\lambda}; \times, \omega, \omega+1, \omega^2+1 \rangle$ for every ordinal $\lambda$. Moreover, we prove (by reduction from Hilbert Tenth Problem) that 
	the $\exists^*\forall^{6}$-fragment  of $\langle \omega^{\omega^\lambda}; \times \rangle$ is undecidable  for every ordinal $\lambda$.
	
\end{abstract}

%

\section{Introduction}

The first-order  theory of ordinals was studied  under different signatures:  linear order,
addition, multiplication or some of their combinations. The purpose of this paper is to 
refine a  result  of the first author proving  that the first-order theory of the multiplicative 
monoid of an ordinal  is decidable if and only if 
it is less than $\omega^{\omega}$, \cite[Thm 11]{bes02}.  
We investigate the natural issue of trying to determine the boundary between decidability and undecidability for syntactic fragments of the theory. We first prove that the existential theory of \str{\N;+,|} (which was shown to be decidable by Lipshitz \cite{lipshitz}) is interpretable in the existential fragment of $\langle \omega^{\omega^\lambda}; \times, \omega, \omega+1, \omega^2+1 \rangle$ for every ordinal $\lambda$, where $\omega, \omega+1, \omega^2+1$ correspond to constants. By \cite{lechner}, this yields a {NEXPTIME} lower bound for  the 
complexity of the latter fragment. Then we prove, by reduction from Hilbert Tenth Problem, that 
 the $\exists^*\forall^{6}$-fragment  (an arbitrary number of existential variables followed 
 by $6$ universal variables) of $\langle \omega^{\omega^\lambda}; \times \rangle$ is undecidable  for every ordinal $\lambda$.
 Our results leave open the question of whether the existential theory of \str{\alpha;\times} is decidable for $\alpha \geq \omega^\omega$. Another related open question (raised in \cite{bes02}) is  whether one can decide satisfiability of systems of multiplicative equations over ordinals with constants. Both questions seem to be quite difficult. We give some insight about the difficulty in the final section of the paper.

\medskip We recall some historical background of  the first-order theory of ordinal arithmetic (see e.g. \cite{Rosenstein82}). The study of decidability and definability issues related to ordinal theories was initiated by Mostowski and Tarski who proved by means of quantifier-elimination that the class of well-ordered structures has a decidable elementary theory (\cite{MT49,DMT78}, see also \cite{Rab78}).
From the undecidability of first-order arithmetic (\cite{church}) one can deduce easily that for every ordinal $\xi$, the first-order theory of  \str{\omega^{\omega^{\xi}}; +, \times} is undecidable.
In the 1960's, by means of automata, B\"uchi (\cite{Buc65}, see also \cite{Mau97}) proved that for any ordinal $\alpha$, the weak monadic second-order theory of \str{\alpha; <} is decidable, from which he deduced decidability of the elementary theory of \str{2^\alpha;+}. This implies that the first-order theory of \str{{\omega^{\xi}}; +} is decidable for every ordinal $\xi$, since  $\omega^\xi=2^{\omega \xi}$. On the other hand, the second author proved in \cite{choffrutDLT2001} that the first-order theory of \str{{\omega^{\omega}}; +, x \mapsto \omega x} is undecidable.

Concerning the  decidability of  arithmetic without addition, i.e., of
\str{\omega; \times},  the result was announced by Skolem in \cite{Sko30}.  Mostowski proved it in \cite{Mos52} as a direct consequence  of  his results on direct products of structures and Presburger's decidability result for \str{\omega; +}. Other proofs can be found in \cite{Ceg81} and \cite{Hod82}. However, unlike the case of addition, decidability does not extend to all ordinals: in \cite{bes02} the first author proved that the theory of \str{\lambda; \times} is decidable if and only if $\lambda < \omega^\omega$. The result still holds if we replace $\times$ with the two binary predicates $|_r$ and $|_l$ where  $x|_ry$ (resp. $x|_ly$) means that $x$ is a right-hand (resp. left-hand) divisor of $y$ (both predicates are definable in \str{\lambda;\times}). 

\medskip We now give a brief outline of our paper. Section \ref{s:preliminaries}
contains the basic notions on ordinals such as the ordering and the two arithmetic operations
of addition and multiplication which should suffice even for the reader 
with no strong background in the theory of ordinals. 
In Section \ref{s:ordinals-as-a-monoid}  we show that 
three specific   constants which are  instrumental for our purpose can be expressed 
in \str{\lambda; \times} with formulas of low syntactic complexity. 
 We denote by $\Omega$
the structure \str{\lambda; \times} enriched with these constants 
and we show in Section \ref{s:interpretation-in-N-divibililty}
that the integers with the addition and the divisibility  can be 
interpreted in the existential fragment of $\Omega$
which gives us the above mentioned NEXPTIME  lower bound. 
In Section \ref{s:undecidability} we 
reduce the problem of 
solving Diophantine equations in the nonnegative integers to the $\exists^*\forall^{6}$-fragment
of \str{\lambda;\times} showing thus, 
via use Matijasevi\u{c} result that this fragment is undecidable.
In the last section \ref{s:final-observations} we observe that 
a simple  proof of the decidability of
the existential fragment of $\Omega$, would immediately 
offer an alternative proof of 
Makanin's result for  word equations.

\section{Preliminaries}
\label{s:preliminaries}

\subsection{Ordinal arithmetic}

We recall useful results about ordinal arithmetic. We refer the reader to the handbook 
of Sierpinski \cite{Sie65} for a more complete  exposition of the 
topic.

The following definition of the \emph{Cantor normal form}, abbreviated CNF,  is actually  a property.

\begin{definition}
\label{de:cantor-normal-form}
Every ordinal $\alpha>0$  has a unique form as a sum of 
decreasing $\omega$-powers with integer coefficients, namely
$$
\alpha = \omega^{\lambda_{r} } a_{r} +  \cdots + \omega^{\lambda_{1}} a_{1}, \quad 
$$
where $\lambda_{r} >  \cdots  >\lambda_{1}\geq 0$ are ordinals and
$a_{r},   \ldots, a_{1} >0$ are integers. The ordinal $\lambda_{r}$
is the \emph{degree} of $\alpha$, written $\partial(\alpha)$, and 
$\lambda_{1}$ its \emph{valuation} written $v(\alpha)$.
An ordinal is a \emph{successor} if $v(\alpha)=0$, otherwise it is a \emph{limit}. 

\end{definition}

\medskip We are given two ordinals in their normal forms
\begin{equation}
	\label{eq:cnf}
\alpha = \omega^{\lambda_{r} } a_{r} +  \cdots + \omega^{\lambda_{1}} a_{1}, \quad 
\beta = \omega^{\mu_{s} } b_{s} +  \cdots + \omega^{\mu_{1}} b_{1}, 
\end{equation}


%
The order is defined by  $\alpha<\beta$ if one of the following conditions is satisfied.

\nl 1) $r<s$ and for all $i=1, \ldots, r$ we have $a_{i}=b_{i}$ and $\lambda_{i}=\mu_{i}$.

\nl 2) for some $t<\min\{r,s\}$ and for all $i=1, \ldots, t$
we have $a_{i}=b_{i}$ and $\lambda_{i}=\mu_{i}$, 
and either $\lambda_{t+1}<\mu_{t+1}$ or $\lambda_{t+1}=\mu_{t+1}$
and $a_{t+1}<b_{t+1}$.

Furthermore, we define $0<\alpha$ for every nonzero ordinal $\alpha$.

\nl We now recall  the definition of the two arithmetic operations on the ordinals
by use of their Cantor normal form 

{
\begin{definition}
\label{de:sum}
The \emph{sum} $\alpha +\beta$ of $\alpha$ and $\beta$ given by their CNF as in (\ref{eq:cnf}) is 
$$
\begin{array}{ll}
\omega^{\lambda_{r} } a_{r} +  \cdots + \omega^{\lambda_{i}} a_{i} +
\omega^{\mu_{s} } b_{s} +  \cdots + \omega^{\mu_{1}} b_{1}  & \text{ if } \lambda_{i}> \mu_{s} >\lambda_{i-1},\\
\omega^{\lambda_{r} } a_{r} +  \cdots + \omega^{\lambda_{i+1}} a_{i+1} +
\omega^{\mu_{s} } (a_{i} + b_{r}) +  \cdots + \omega^{\mu_{1}} b_{1} & \text{ if }  \mu_{s} =\lambda_{i},\\
\beta  & \text{ if }  \mu_{s} >\lambda_{r}.
\end{array}
$$
\end{definition}

Furthermore, we have $0+\beta= \beta$ and $\alpha +0=\alpha$.
The sum is associative, has a neutral element $0$ and is not commutative.
It is left-cancellative ($\alpha+ \beta=\alpha+ \gamma$ implies $\beta= \gamma$)
but not right-cancellative.

\begin{definition}
\label{de:product}

 The \emph{product} $\alpha +\beta$  of $\alpha$ and $\beta$ given by their CNF as in (\ref{eq:cnf}) is 
\begin{equation}
\label{eq:product1}
\alpha \times \beta =  \omega^{\lambda_{r} + \mu_{s} } b_{s} +  \cdots + \omega^{\lambda_{r} +\mu_{1}} b_{1}
\end{equation}
if  $\mu_{1}>0$. 
If $\mu_{1}=0$  set  $\beta =\beta' + b_{1}$ where $0<b_{1}<\omega$ and $v(\beta')>0$.
Then
$$
\alpha \times \beta =  \alpha \times\beta' + \omega^{\lambda_{r} } (a_{r} \times b_1)   +  \omega^{\lambda_{r-1} } a_{r-1} + \cdots + \omega^{\lambda_{1}} a_{1},  
$$
yielding
\begin{multline}
\label{eq:product2}
\alpha \times \beta = \omega^{\lambda_{r}+\mu_{s}} b_{s} + \omega^{\lambda_{r}+\mu_{s-1}} b_{s-1} + 
 \cdots + \omega^{\lambda_{r}+ \mu_{2}} b_{2} + \\
\omega^{\lambda_{r} } a_{r}b_{1} +  \omega^{\lambda_{r-1} } a_{r-1} +\cdots + \omega^{\lambda_{1}} a_{1}.
\end{multline}
\end{definition}
}
Furthermore, we have $0\times \beta= \alpha\times 0=0$ for all $\alpha, \beta$.
The  multiplication is associative, has a neutral element $1$, is noncommutative,  
is left- (but not right-) \emph{cancellative} ($x\times y = x\times z\Rightarrow y=z$)
and left- (but not right-) distributes over the addition.

The next result is the unique combinatorial property 
of ordinals that we will need. 

\begin{lemma}{\cite[page 352]{Sie65}}
\label{le:commuting-successor-ordinals}
The product of two successor ordinals commutes if and only if they are of the 
form $\alpha^{j}$ and $\alpha^{n}$ for some ordinal $\alpha$ and some integers $j,n$. 
\end{lemma}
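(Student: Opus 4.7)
The direction $(\Leftarrow)$ is immediate: if $\beta = \alpha^{j}$ and $\gamma = \alpha^{n}$, then $\beta\gamma = \alpha^{j+n} = \alpha^{n+j} = \gamma\beta$, since integer addition on the exponents is commutative.

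For $(\Rightarrow)$, my plan is to analyze the Cantor normal forms of the two ordinals directly. Assume $\beta,\gamma$ are commuting successor ordinals, both $>1$, with CNFs
$$\beta = \omega^{\lambda_{r}}a_{r} + \cdots + \omega^{\lambda_{2}}a_{2} + a_{0}, \qquad \gamma = \omega^{\mu_{s}}b_{s} + \cdots + \omega^{\mu_{2}}b_{2} + b_{0}.$$
Expanding $\beta\gamma$ and $\gamma\beta$ via the product formulas (\ref{eq:product1})--(\ref{eq:product2}) yields each as a CNF with $r + s - 1$ terms. Comparing the leading terms gives $a_{r}=b_{s}$ together with
$$\lambda_{r} + \mu_{s} = \mu_{s} + \lambda_{r}.$$
The latter, combined with the additive analog of the present lemma (two ordinals commute for addition iff both are nonnegative integer multiples of a common ordinal, a fact easily derived from the CNF of ordinal addition), forces $\lambda_{r} = \zeta k$ and $\mu_{s} = \zeta m$ for some ordinal $\zeta$ and positive integers $k,m$. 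Matching the constants at the other end yields $a_{0}=b_{0}$. Systematically comparing the intermediate terms of the two CNFs should then pin down the remaining coefficients $a_{i},b_{j}$ and exponents $\lambda_{i},\mu_{j}$, and show that $\beta$ and $\gamma$ are built from a common successor ordinal $\alpha$ with $\beta = \alpha^{k}$ and $\gamma = \alpha^{m}$.

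The main obstacle is the combinatorial bookkeeping needed to propagate these constraints down the two CNFs, complicated by the fact that ordinal multiplication is only left-distributive over addition. In particular, an expression of the shape $(\beta q + r)\beta$ arising from a putative ``ordinal Euclidean algorithm'' (\emph{viz.}, write $\gamma = \beta q + r$ with $0 \le r < \beta$, substitute into $\beta\gamma = \gamma\beta$, obtain $\beta^{2}q + \beta r = (\beta q + r)\beta$, and induct) cannot be rewritten symbolically, and its CNF must be computed from scratch via (\ref{eq:product1})--(\ref{eq:product2}); so this alternative plan modelled on the Lyndon--Sch\"utzenberger theorem for free monoids runs into essentially the same case analysis as the direct approach as soon as $r > 0$. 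The degenerate situation where $\beta$ or $\gamma$ is a finite integer reduces to elementary observations about commuting integer powers and can be dealt with separately before entering the infinite case.
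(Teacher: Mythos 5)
First, note that the paper does not prove this lemma at all: it is quoted directly from Sierpinski \cite[p.~352]{Sie65}, so there is no in-paper argument to compare yours against; you are in effect reproving a classical theorem. Your $(\Leftarrow)$ direction is correct. The $(\Rightarrow)$ direction, however, is a plan rather than a proof, and the gap sits exactly where all the work of the classical proof lies. Comparing the extreme terms of the two CNFs only yields $a_{r}=b_{s}$, $a_{0}=b_{0}$, and $\lambda_{r}+\mu_{s}=\mu_{s}+\lambda_{r}$ (hence $\lambda_{r}=\zeta k$, $\mu_{s}=\zeta m$ via the additive analogue). This constrains the \emph{degrees} of $\beta$ and $\gamma$ but does not produce the common base $\alpha$: one still has to construct a successor ordinal $\alpha$ of degree $\zeta$ and show that the entire CNFs of $\beta$ and $\gamma$ (all intermediate exponents $\lambda_{i},\mu_{j}$ and coefficients $a_{i},b_{j}$) are those of $\alpha^{k}$ and $\alpha^{m}$. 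Your phrase ``should then pin down the remaining coefficients'' is precisely the missing step; the intermediate terms of $\beta\gamma$ are the lower terms of $\beta$ copied verbatim (by left-distributivity and formula (\ref{eq:product2})), those of $\gamma\beta$ are the lower terms of $\gamma$, so naive term-by-term matching would force $\beta$ and $\gamma$ to have identical tails, which is not what happens when $k\neq m$ (compare $\alpha$ and $\alpha^{2}$). The known proofs resolve this by a genuine descent/induction (a Euclidean-algorithm-style argument on the degrees or on the ordinals themselves, using the uniqueness of the maximal successor right factor as in Theorem \ref{prime-factorization}); your own closing discussion shows you see that this is where the difficulty lies, but the argument is not carried out.

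A second, smaller problem: as stated (and the paper itself is loose here), the equivalence is false for finite ordinals --- $2$ and $3$ are successor ordinals whose product commutes, yet they are not both powers of a common ordinal. Sierpinski's theorem carries the hypothesis that the ordinals are transfinite (which, combined with the easy fact that a finite ordinal $\geq 2$ never commutes with an infinite one, makes the statement correct); the paper only ever applies the lemma to infinite ordinals such as products of $\omega+1$ and $\omega^{2}+1$, so no harm is done there. But your plan to ``deal with the finite case separately via elementary observations about commuting integer powers'' cannot succeed as written: in that case the forward implication simply fails, and the correct move is to exclude it by hypothesis rather than to prove it.
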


Observe  that this property  does not hold in general if the ordinals are limit, e.g.,
$(\omega^{2} + \omega) (\omega^{3} + \omega^{2})=(\omega^{3} + \omega^{2}) (\omega^{2} + \omega)$.

\subsection{Prime factorization}

We say that an ordinal $\alpha>0$ is  {\it prime} if it cannot be written as the
product of two ordinals less than  $\alpha$; an equivalent definition is that $\alpha$ admits exactly two  right-hand divisors. One proves  that there are three kinds of prime ordinals, \cite[page 336]{Sie65}: 
\emph{natural primes}  less than $\omega$ which are the usual primes, \emph{nonfinite successor primes} which are of the
 form $\omega^{\mu}+1$ for some $\mu>0$,
and \emph{limit primes} which are  of the form $\omega^{\omega^{\xi}}$ for some ordinal $\xi$.

The  factorization of natural 
numbers is unique up to permutation of the factors.
Here we require a stronger convention since,  e.g. $(\omega+1) \cdot \omega = \omega \cdot \omega$.
To this purpose Jacobsthal imposed a  condition on the sequence of prime factors.

\begin{theorem}[\cite{Jac09}]
\label{prime-factorization}
Every ordinal $\alpha$ has a unique factorization of the form
$\alpha=\omega^{A_{1}}A_{2}$ where $A_{1}\geq 0$ and 
$$
\begin{array}{l}
\omega^{A_{1}}=(\omega^{\omega^{\xi_{1}}})^{n_{1}}\cdots (\omega^{\omega^{\xi_{r}}})^{n_{r}}\\
A_{2}=a_{0}(\omega^{\mu_{1}} +1)a_{1}(\omega^{\mu_{2}} +1) \cdots a_{n-1}(\omega^{\mu_{n}} +1) a_{n}
\end{array}
$$
for some ordinals $\xi_{1}>\xi_{2} > \cdots > \xi_{r}$,
for some integers $n\geq 0$, $a_{0}, a_{1}, \ldots, a_{n}>0$,  and some ordinals 
$\mu_{1}, \mu_{2}, \ldots, \mu_{n} \geq 1$.
We say that $A_{2}$ is the  \emph{maximal successor right factor}
of the ordinal $\alpha$.
\end{theorem}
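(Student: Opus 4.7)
My plan is to derive both existence and uniqueness directly from Cantor normal form, in two successive reductions: first a split $\alpha = \omega^{A_1} A_2$ into a ``limit part'' and a ``successor part'', and then separate decompositions of $\omega^{A_1}$ and of $A_2$ obtained by reading off their own CNFs.

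For the top-level split, write $\alpha = \omega^{\lambda_r} a_r + \cdots + \omega^{\lambda_1} a_1$ in CNF and set $A_1 := v(\alpha) = \lambda_1$ and
\[
A_2 := \omega^{\lambda_r - \lambda_1} a_r + \cdots + \omega^{\lambda_2 - \lambda_1} a_2 + a_1
\]
(ordinal left-subtraction; when $\lambda_1 = 0$ this reduces to $A_2 = \alpha$, $A_1 = 0$). Then $A_2$ is a successor because its constant term is $a_1 \geq 1$, and a direct application of (\ref{eq:product2}) with left factor $\omega^{\lambda_1}$ and right factor $A_2$ (whose rightmost exponent is $0$) yields $\omega^{A_1} A_2 = \alpha$. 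The split is unique: if $\alpha = \omega^X Y$ with $Y$ a successor, the same CNF computation gives $v(\alpha) = X$, forcing $X = A_1$, and left-cancellation of multiplication then forces $Y = A_2$. Next, for the factor $\omega^{A_1}$, write the CNF of $A_1$ itself as $\omega^{\xi_1} n_1 + \cdots + \omega^{\xi_r} n_r$ with $\xi_1 > \cdots > \xi_r$; the standard identities $\omega^{\beta+\gamma} = \omega^\beta \omega^\gamma$ and $\omega^{\delta n} = (\omega^\delta)^n$ (a short induction from $\omega^\delta \omega^\delta = \omega^{\delta \cdot 2}$) give $\omega^{A_1} = (\omega^{\omega^{\xi_1}})^{n_1} \cdots (\omega^{\omega^{\xi_r}})^{n_r}$, and uniqueness of the $\xi_i, n_i$ is uniqueness of the CNF of $A_1$. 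Finally, for $A_2$: if $A_2$ is finite take $n = 0$ and $a_0 = A_2$; otherwise write its CNF as $A_2 = \omega^{\nu_n} c_n + \cdots + \omega^{\nu_1} c_1 + c_0$ with $\nu_n > \cdots > \nu_1 > 0$ and set $a_i := c_i$, $\mu_i := \nu_i - \nu_{i-1}$ (with $\nu_0 := 0$). The remaining verification is the identity
\[
a_0 (\omega^{\mu_1}+1) a_1 (\omega^{\mu_2}+1) \cdots (\omega^{\mu_n}+1) a_n = \omega^{\nu_n} a_n + \cdots + \omega^{\nu_1} a_1 + a_0,
\]
proved by induction on $n$; uniqueness of the $a_i, \mu_i$ then follows because the recipe inverts the CNF of $A_2$.

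The main obstacle is the inductive identity for $A_2$. In the inductive step, adjoining ``$(\omega^{\mu_i}+1) a_i$'' on the right of the current partial product $P_{i-1} = \omega^{\nu_{i-1}} a_{i-1} + \cdots + a_0$ unfolds into three moves. First, the limit-regime multiplication $P_{i-1} \cdot \omega^{\mu_i}$ collapses, by (\ref{eq:product1}) applied to the single-term limit right-factor $\omega^{\mu_i}$, to $\omega^{\nu_{i-1} + \mu_i} = \omega^{\nu_i}$, independently of the lower-order structure of $P_{i-1}$. Second, the ``$+1$'' adds $P_{i-1}$ back, giving $P_{i-1}(\omega^{\mu_i}+1) = \omega^{\nu_i} + \omega^{\nu_{i-1}} a_{i-1} + \cdots + a_0$. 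Third, right-multiplication by the finite integer $a_i$ merely scales the new leading coefficient, yielding $P_i = \omega^{\nu_i} a_i + \omega^{\nu_{i-1}} a_{i-1} + \cdots + a_0$. The bookkeeping of the cumulative exponents $\nu_i = \mu_1 + \cdots + \mu_i$ and of the surviving coefficients is routine but must be made explicit; everything else is a direct translation of uniqueness of CNF and of left-cancellation.
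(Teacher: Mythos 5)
The paper offers no proof of this statement: it is imported verbatim from Jacobsthal \cite{Jac09} (see also \cite{Sie65}), so there is nothing in the text to compare your argument against. Your proof is correct, and it is the natural self-contained derivation from Cantor normal form using only the tools the paper already sets up. The three reductions all check out: the split $\alpha=\omega^{v(\alpha)}A_2$ with $A_2$ a successor follows from (\ref{eq:product2}), and is unique because any factorization $\omega^{X}Y$ with $Y$ a successor forces $X=v(\alpha)$ and then $Y=A_2$ by left-cancellativity; the decomposition of $\omega^{A_1}$ is the CNF of the exponent $A_1$ transported through $\omega^{\beta+\gamma}=\omega^{\beta}\omega^{\gamma}$ together with the injectivity of $\beta\mapsto\omega^{\beta}$; and the inductive identity $a_0(\omega^{\mu_1}+1)a_1\cdots(\omega^{\mu_n}+1)a_n=\omega^{\nu_n}a_n+\cdots+\omega^{\nu_1}a_1+a_0$ with $\nu_i=\mu_1+\cdots+\mu_i$ is exactly what (\ref{eq:product1}), left-distributivity and (\ref{eq:product2}) give, and since it exhibits the CNF of the product its invertibility (uniqueness of CNF plus left-cancellation of ordinal addition to recover each $\mu_i$ from $\nu_{i-1}+\mu_i=\nu_i$) yields uniqueness of the successor part. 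Two minor points you could make explicit: the statement should be restricted to $\alpha>0$ (as in Definition \ref{de:cantor-normal-form}), since $a_0>0$ forces $\omega^{A_1}A_2\geq 1$; and the uniqueness of the top-level split tacitly uses that every product of the prescribed form $a_0(\omega^{\mu_1}+1)\cdots a_n$ is a successor, which your induction supplies because the computed CNF has constant term $a_0>0$.
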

Observe that the condition on the exponents of limit primes is necessary
if one wants to guarantee unicity: for instance, $\omega$ and
$\omega^{\omega}$ are primes and 
$\omega^{\omega}= \omega \times \omega^{\omega}$. 
The prime factorization of the product of two prime factorizations follows the 
rule
\begin{equation}
\label{eq:product-of-factorizations}
\omega^{A_{1}} A_{2} \times \omega^{B_{1}} B_{2} 
= \left\{
\begin{array}{ll}
\omega^{A_{1}} A_{2}  B_{2} & \text{ if } B_{1}=0\\
\omega^{A_{1} + \partial(A_{2})+ B_{1}} B_{2} & \text{ if } B_{1} \ne 0 
\end{array}
\right.
\end{equation}

\subsection{Logic}


Let us specify our logical conventions and notations. We work within first-order predicate calculus
with equality.
Given a signature $\+L$, an $\+L$-\emph{structure} is denoted as $\langle D; \+L \rangle$ where $D$ is the \emph{base set} of the structure
and $ \+L$ is a set of interpreted relation and function symbols. We will actually 
confuse formal symbols and their interpretations.

We recall that a predicate is 
 $\Sigma_{n}$ (resp.  $\Pi_{n}$) if it is defined by a  formula 
 that begins with some existential (resp. universal) quantifiers and alternates $n-1$  
 times between blocks of existential and universal quantifiers.
  A \emph{fragment} of a theory
 is the set of sentences  of bounded complexity. E. g., the existential 
 fragment is the set of sentences of complexity $\Sigma_{1}$. It is a general
 concern when dealing with an undecidable theory and a motiviation 
 for our work to investigate 
its fragment with lowest complexity which is undecidable.

 We will, when possible or when relevant, consider not only the number of alternations
 of blocks, but also the sizes of the blocks by specifying the number
 of variables in each block. For example, if $\phi(x,y, z,t,u, \ldots)$ is a quantifier-free formula then  $\forall x,y\ \exists z,t,u\ \phi(x,y, z,t,u, \ldots)$
 has complexity $\Pi_{2}$ but we may say more precisely 
  that its complexity is $\forall  \forall \exists \exists \exists$, also written 
$\forall^{2} \exists^{3}$. When speaking of a set of formulas, not single formulas, 
we might say that their complexity is for example $\forall^{*} \exists^{3}$
meaning that for each formula there exists an integer $n$ such that its complexity
is $\forall^{n} \exists^{3}$.

We assume the reader has some familiarity with 
 computing the logical complexity. When evaluating  
 the complexity we will skip some steps to keep the computation
 readable. 
The next definition is crucial.

\begin{definition}
\label{de:definable}
Given an ${\cal L}-$structure $M= \langle {D; {\cal L}} \rangle$, 
an  $n-$ary relation $R$ over $D$
 is \emph{elementary definable} (shortly: \emph{definable}) in $M$ if there exists a ${\cal L}-$formula $\varphi$ with $n$ free variables such that $R=\{(a_1,\dots,a_n) :  M \models \varphi(a_1,\dots,a_n)\}$. Given a syntactic fragment $F$, we say that $R$ is \emph{$F-$definable} if $R$ is definable by a formula which belongs to $F$.
\end{definition}

\begin{example}
\label{ex:0-1}
In the structure  \str{\omega^{\omega^\lambda} ; \times } where $\lambda$ is an ordinal greater than $0$,
the formula $x\times x=x$
defines the set $\{0,1\}$ and the formula
 $(x\times x=x) \wedge \exists y\ (x\times y \not=x)$ defines the singleton $\{1\}$.
\end{example}

\section{The multiplicative monoid of ordinals}
\label{s:ordinals-as-a-monoid}

 We shall consider logical structures with domain an ordinal $\alpha$, which is  identified with the set of ordinals $\beta<\alpha$, and predicates and functions whose interpretation correspond to restrictions to $\alpha$ of relations and functions defined on the class of ordinals, such as the function $\times$. For simplicity we will use a single symbol for each restriction of a relation, e.g. we simply write  $\langle \alpha; \times  \rangle$. 
   Note that  \cite{bes02} dealt with $\times$ as a ternary relational symbol, which allows 
  {one} to consider any ordinal $\alpha$ as the base set of the structure. In this paper we consider $\times$ as a function symbol, which imposes that the base set of the structure is closed under multiplication, which holds if and only if it is an ordinal of the form $\omega^{\omega^\lambda}$ for some ordinal $\lambda$, as can be readily verified 
 from the basic definitions of the operations on ordinals.  

\medskip

\subsection{Elementary predicates}
\label{subsec:def-pred}

We show how the four constants $1, \omega, \omega+1$ and $\omega^{2}+1$
can be expressed in \str{\omega^{\omega^{\lambda}};\times } for every ordinal $\lambda>0$.
The idea is that  having these constants as free may lower the syntactical complexity of the 
predicates.

\begin{proposition}\label{prop:elemdef}
 For every ordinal $\lambda>0$, the following predicates are definable in  \str{\omega^{\omega^{\lambda}};\times} 
\begin{enumerate}
\item $\texttt{Zero}(x)=\{0\}$ is $\exists$-and $\forall$-definable

\item $\texttt{One}(x)=\{1\}$   is $\exists$-and $\forall$-definable

\item  $\Prime(x)=\{\alpha \mid \alpha \text{ is a prime}\}$ is  $\forall \forall$-definable:  %
 
 \item    $\LimPrime(x)=\{\omega^{\omega^{\xi}} \mid  \xi \text{ is an ordinal} \}$ is $\exists \forall$-
 and $\forall\exists$-definable.

\item  $\texttt{Omega}(x)=\{\omega\}$ is $\exists   \forall \forall$-definable.

\item $\texttt{OmegaPlusOne}(x)=\{\omega+1\}$ is $\exists \exists \forall \forall $-definable.

\item $\texttt{OmegaSquarePlusOne}(x)=\{\omega^2+1\}$ is   $\exists \exists \forall \forall $-definable.

\end{enumerate}
\end{proposition}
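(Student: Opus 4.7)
For each item the plan is to exhibit an explicit formula of the advertised syntactic shape and verify it semantically. I will rely on a handful of elementary facts about ordinal multiplication that follow directly from Definition \ref{de:product}: (a) $x \cdot x = x$ iff $x \in \{0,1\}$; (b) for $x \neq 0$, $x \cdot v = x$ iff $v = 1$, since $x \cdot v > x$ whenever $v \geq 2$; (c) for $y \geq 2$, $y \cdot x = x$ iff $v(x) > 0$ (or $x = 0$), by absorption of $\partial(y) = 0$ in formula (\ref{eq:product1}); (d) $x \cdot \omega = \omega^{\partial(x)+1}$, so the degree of $x$ is recovered from $x \cdot \omega$; (e) ``$y$ is finite'' is equivalent to $\forall z\,(z \cdot z \neq z \to z \cdot y \neq y)$, i.e.\ no ordinal $\geq 2$ is left-absorbed into $y$.

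For items 1 and 2, I take $\texttt{Zero}(x) \equiv \forall y\,(x \cdot y = x)$ and $\texttt{One}(x) \equiv \forall y\,(x \cdot y = y)$ for the universal forms, and adapt Example \ref{ex:0-1} for the existential ones, using $(x \cdot x = x) \wedge \exists y\,(x \cdot y \neq y)$ for $\texttt{Zero}$. For item 3, fact (b) lets me replace ``$v=1$'' by the quantifier-free equation $x \cdot v = x$ inside the divisor test, so $\Prime(x) \equiv (x \cdot x \neq x) \wedge \forall u \forall v\,(x = u \cdot v \to x \cdot v = x \vee v = x)$ is in $\forall\forall$ form. For item 4, a limit prime is a prime with $v(x) > 0$, and (c) supplies the limit condition as $\exists y\,(y \cdot y \neq y \wedge y \cdot x = x)$; conjoining with the $\forall\forall$ primeness formula and prenexing gives the $\exists \forall$ form, while pushing the existential inside the primeness universal (legal because the witness is independent of the primeness variables) yields the dual $\forall \exists$ form.

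For item 5, among the limit primes $\omega^{\omega^\xi}$ only $\omega$ has the property that every ordinal $\geq 2$ left-absorbed into it is finite; equivalently, $x = \omega$ iff $x$ is a limit prime and every $u \geq 2$ with $u \cdot x = x$ is finite. Using (e), the last condition becomes $\forall u \forall v\,(u \cdot x = x \wedge u \cdot u \neq u \wedge v \cdot v \neq v \to v \cdot u \neq u)$; by reusing the same two universally-quantified variables $u, v$ for both this finiteness check and the $\Prime$-test of $x$ (namely $x = u \cdot v \to x \cdot v = x \vee v = x$), together with an existential witness $y_0$ for the limit condition, the formula collapses to $\exists y_0 \forall u \forall v\,\phi$, i.e.\ $\exists \forall \forall$. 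For items 6 and 7, fact (d) identifies $\omega + 1$ and $\omega^2 + 1$ as the unique successor primes satisfying $x \cdot z = z \cdot z$ and $x \cdot z = z \cdot z \cdot z$ with $z = \omega$ (every successor prime has the form $\omega^\mu + 1$ with degree $\mu$). Pulling out two existential witnesses — the ordinal $z$ playing the role of $\omega$, and the Omega-internal witness $y_0$ supplied by item 5's formula for $\texttt{Omega}(z)$ — and reusing the two universal variables for both the $\texttt{Omega}$-test on $z$ and the successor-prime test (primeness plus ``not a limit'') on $x$, yields $\exists \exists \forall \forall$. The main obstacle throughout is precisely this variable-sharing bookkeeping: the naive combinations of sub-formulas blow up the quantifier count, and the argument must verify at each step that the merged universal quantifiers correctly enforce every intended algebraic condition.
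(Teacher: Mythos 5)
Your formulas and the quantifier bookkeeping essentially reproduce the paper's proof: items 1--4 are the same constructions (with the cosmetic difference that you encode ``$v=1$'' by the equation $x\cdot v=x$ where the paper writes the divisor test with $y\cdot y\neq y$), and in items 5--7 you share the two universal variables between the primeness test and the auxiliary test exactly as the paper does, arriving at the same complexities. The substantive divergences are the characterizations used in items 5--7: the paper characterizes $\omega$ as the limit prime whose absorbed left divisors commute pairwise, and excludes $x=\omega$ in items 6--7 by the clause $x\times u\neq u\times x$, whereas you use ``all absorbed left divisors $\geq 2$ are finite'' and ``$x$ is a successor prime''. Those alternative characterizations are mathematically sound and of the same syntactic cost.

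However, your auxiliary fact (e) is false, and it is the tool you use to implement item 5. The condition $\forall z\,(z\cdot z\neq z\to z\cdot y\neq y)$ does not say ``$y$ is finite''; it says ``$y$ is a nonzero successor ordinal''. Indeed $y=\omega+1$ satisfies it (for finite $z\ge 2$ one gets $z(\omega+1)=\omega+z\neq\omega+1$, and for infinite $z$ the degree of the product exceeds $1$), while $y=0$, which is finite, fails it. Consequently the formula you write for $\texttt{Omega}$ actually asserts ``$x$ is a limit prime all of whose left divisors $u\ge 2$ with $u\cdot x=x$ are successors'', not ``\dots are finite''. This happens to still define $\{\omega\}$ --- for a limit prime $\omega^{\omega^{\xi}}$ with $\xi\ge 1$ the ordinal $u=\omega$ satisfies $u\cdot x=x$ and is itself absorbed by $v=2$, so the test fails --- but that is not the verification you gave, and as written the proof of item 5 (and of the $\texttt{Omega}$ subtest reused inside items 6 and 7) rests on an incorrect equivalence. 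Either redo the semantic check with the correct reading of the test (``all absorbed left divisors are successors''), or switch to the paper's commutativity characterization $y\cdot x=x\wedge z\cdot x=x\Rightarrow y\cdot z=z\cdot y$, which fits in the same two universal variables.
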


\begin{proof} We prove successively these assertions.

\begin{enumerate}

\item
 $\texttt{Zero}(x)$ can be defined by the formula  $ 
\forall y\ (x\times y=x)$ and by 
$\exists y\ (y\times y \ne y \wedge x \times y =x)
$

\item
$\texttt{One}(x)$ is definable by 
$ 
 \forall y\ (x\times y=y)$ and  by
 $x\times x=x  \wedge \exists   y\ (x \times y\not=x).
$

\item $\Prime(x)$ is definable by  $\forall  y,z\ A(x,y,z)$ where $A(x,y,z)$ is the formula 
$$x\times x\not =x \wedge    ( y\times y \not=y \wedge x=z\times y 
 \Rightarrow y=x).$$

 Indeed, it suffices to say that $x$ has exactly two right-hand side 
 divisors, namely $1$ and $x$.

\item   
  $
\LimPrime(x) $ is definable by $ \forall  y,z\ \exists t \ C(x,y,z,t)  $ and $  \exists t\  \forall  y,z \ C(x,y,z,t)\\
 $
where $C(x,y,z,t) = A(x,y,z) \wedge B(x,t)$ with 
 $B(x,t) = t\times t\not=t  \wedge t \times x=x$.

 It suffices to say that $x$ is a prime and that $t \times x=x$
for some $t\not=0,1$. Indeed if $x$ is a limit prime then e.g. $2x=x$. 
On the other hand if $x$ is a successor prime, then by unicity of 
factorization the equation $yx=x$ implies $y=1$.

\item  $\texttt{Omega}(x)$ is definable by $\exists t  \forall  y,z\ E(x,y,z,t) $\\
where $E(x,y,z,t) =C(x,y,z,t) \wedge  D(x,y,z)$
and 
$D(x,y,z)$ is  the formula $$y \times x=x \wedge z \times x=x \Rightarrow (y\times z= z \times y)
$$

 It suffices to say that $\omega$ is a limit prime and that all its left divisors 
 are integers,  thus ordinals that commute pairwise. Indeed, a limit
 prime different from $\omega$ is of the form $\omega^{\omega^{\alpha}}$
 with $\alpha>0$, but then $2\times \omega^{\omega^{\alpha}}= \omega \omega^{\omega^{\alpha}}= \omega^{\omega^{\alpha}}$
 and $2\times \omega= \omega\not= \omega\times 2$.  

\item   $\texttt{OmegaPlusOne}(x)$ can be defined by 
$$
\begin{array}{l}
  \Prime(x) \wedge  \exists u\ (\texttt{Omega}(u) \wedge x\times u= u \times u   \wedge x\times u  \ne u \times x)\\
 = \forall  y \forall z\ A(x,y,z)\wedge  \exists u ( \exists t  \forall  v,w\ E(u,v,w,t)\wedge x\times u= u \times u \ne u \times x)  \\
 = \exists u \exists t   \forall  y \forall z\  (A(x,y,z)\wedge   (E(u,y,z,t)\wedge x\times u= u \times u \ne u \times x))  \\
= \exists u \exists t   \forall  y \forall z\  F(x,y,z,u,t)  \\
\end{array}
$$

Indeed, it suffices to say that $\omega+1$ is the only prime $p\not=\omega$ which satisfies 
$p\times \omega=\omega^{2}$. Indeed, if $p$ is finite then $p\times \omega=\omega$.
If it is a nonfinite successor $\omega^{\alpha}+1$ then
$p\times \omega=\omega^{\alpha+1}=\omega^{2}$ implies $\alpha=1$. If
it is of the form $\omega^{\omega^{\alpha}}$ with $\alpha>0$ then
$\omega^{\omega^{\alpha}}\times \omega = \omega^{\omega^{\alpha}+1}\not=\omega^{2}$

\item 
$\texttt{OmegaSquarePlusOne}(x)$. Similarly to the above case 
we observe that the ordinal $\omega^{2}+1  $

is the only prime $p \ne \omega^2$ such that $p\times \omega= \omega^{3}$.
This leads to a formula of the form $
\exists u \exists t   \forall  y \forall z\  G(x,y,z,u,t)%
$
for some suitable quantifier-free formula $G$
(replacing the formula $F$ of the above case).

\end{enumerate}
\end{proof}

\section{Interpreting the existential fragment of
$\langle \N; +, |\rangle$ in the existential fragment of
$\langle \omega^{\omega^\lambda}; \times, \omega, \omega +1 , \omega^{2} +1\rangle$}
\label{s:interpretation-in-N-divibililty}

Let $\lambda>1$ be an ordinal. The full theory of $\langle \omega^{\omega^\lambda}; \times\rangle$ is undecidable, cf. \cite[Thm. 11]{bes02}.
Here we show that the existential fragment of 
$\langle \N; +, |\rangle$, cf. \cite{lipshitz}, can be interpreted 
in the existential fragment of the structure $\Omega= \langle \omega^{\omega^\lambda}; \times, 1, \omega, \omega +1 , \omega^{2} +1\rangle$ (i.e. the expansion of $\langle \omega^{\omega^\lambda}; \times \rangle$ with constants $1, \omega, \omega +1 , \omega^{2} +1$). 
 This gives a lower bound on the complexity of the latter fragment.

The idea is to take advantage of the fact that the products of the elements 
$\omega +1$ and $\omega^{2} +1$ define a  free monoid to which 
the elementary combinatorial 
property of Lemma \ref{le:commuting-successor-ordinals} applies.

\begin{lemma}
\label{le:indispensable-for-multiples}
For every integer $n \geq 0$,
the non null solutions of the equation 
$$z(\omega +1)^{n}(\omega^2 +1)=(\omega +1)^{n}(\omega^2 +1)z$$
in $ \omega^{\omega^\lambda}$ are of the form $((\omega +1)^{n}(\omega^2 +1))^{r}$ for arbitrary integers $r$.
\end{lemma}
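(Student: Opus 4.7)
The plan is to reduce the commutation equation to the case where $z$ is a successor ordinal, then apply Lemma \ref{le:commuting-successor-ordinals} to express both $z$ and $\beta:=(\omega+1)^{n}(\omega^{2}+1)$ as powers of a common base, and finally use uniqueness of the Jacobsthal factorization (Theorem \ref{prime-factorization}) to rule out $\beta$ being a proper power.

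Since $\beta$ is a product of successor primes, its Jacobsthal factorization is simply $\omega^{0}\,\beta$. Writing $z = \omega^{A_{1}} A_{2}$ in Jacobsthal form and applying (\ref{eq:product-of-factorizations}),
$$z\beta = \omega^{A_{1}}(A_{2}\beta) \qquad\text{and, if } A_{1}\neq 0, \qquad \beta z = \omega^{\partial(\beta)+A_{1}}\,A_{2}.$$
Uniqueness of the Jacobsthal decomposition applied to $z\beta = \beta z$ would then force the successor parts to agree, i.e.\ $A_{2}\beta = A_{2}$, whence $\beta = 1$ by left-cancellation --- a contradiction. Therefore $A_{1}=0$, so $z$ is a (nonzero) successor ordinal, and Lemma \ref{le:commuting-successor-ordinals} yields an ordinal $\gamma$ and nonnegative integers $j, k$ with $z=\gamma^{j}$ and $\beta=\gamma^{k}$.

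To conclude it suffices to show that $k=1$, for then $\gamma=\beta$ and $z=\beta^{j}$, which is of the desired form $\beta^{r}$. Since $\beta$ is a successor, so is $\gamma$; write its Jacobsthal form $\gamma = c_{0} q_{1} c_{1} \cdots q_{s} c_{s}$, where the $q_{i}$ are successor primes and the $c_{i}$ are positive integers. The Jacobsthal factorization of $\gamma^{k}$ is obtained by concatenating $k$ copies of the list $q_{1},\dots, q_{s}$, so the terminal prime $q_{s}$ occurs exactly $k$ times. But the unique factorization of $\beta=(\omega+1)^{n}(\omega^{2}+1)$ contains the prime $\omega^{2}+1$ just once, and only as its final factor; matching the two via Theorem \ref{prime-factorization} therefore forces $k=1$. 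The main delicate point is the Jacobsthal bookkeeping in the first step (notably the case split on whether $A_{1}=0$), while the final primitivity argument reduces to straightforward prime counting.
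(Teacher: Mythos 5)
Your proof is correct and follows essentially the same route as the paper's: use the Jacobsthal factorization (Theorem \ref{prime-factorization}) to force $z$ to be a successor ordinal, then apply Lemma \ref{le:commuting-successor-ordinals} and the primitivity of $(\omega+1)^{n}(\omega^{2}+1)$ to conclude. You merely fill in details the paper leaves implicit, namely the left-cancellation argument showing $A_{1}=0$ and the prime-counting argument showing $k=1$.
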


\begin{proof}
By Theorem \ref{prime-factorization},  $z$ can be written as $z=\omega^{z_1}z_2$ where $z_2$ is a successor ordinal. Applying again Theorem \ref{prime-factorization} to each member of the equation $z(\omega +1)^{n}(\omega^2 +1)=(\omega +1)^{n}(\omega^2 +1)z$ implies that $z_1=0$, i.e. that $z$ is a successor ordinal. By Lemma \ref{le:commuting-successor-ordinals} this implies that there exist $\alpha$ and $j,r<\omega$ such that 
$(\omega +1)^{n}(\omega^2 +1)=\alpha^{j}$ and $z=\alpha^{r}$. The former condition implies  $j=1$,
 thus  $z=((\omega +1)^{n}(\omega^2 +1))^{r}$.  
\end{proof}

The interpretation of $\langle \N; +, |,0,1\rangle$ in $\Omega$ consists {of} identifying any integer $i$ with the ordinal $(\omega +1)^{i}$.

\begin{lemma}\label{lem:dom}
	\label{le:Div}
	The set $\Dom=\{(\omega+1)^{i} \mid 0 \leq i  <\omega \}$ 
	is definable in $\Omega$ by the quantifier-free formula $$\theta(x): \quad x(\omega+1)= (\omega+1)x \wedge x (\omega+1) \ne x.$$
\end{lemma}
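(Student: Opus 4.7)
The plan is to prove both inclusions. The forward direction is immediate: if $x = (\omega+1)^i$ with $i \geq 0$, then $x(\omega+1) = (\omega+1)^{i+1} = (\omega+1)x$ by associativity, and this value differs from $x$ since $\omega+1 > 1$, so $\theta(x)$ holds.

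For the reverse direction, suppose $\theta(x)$ holds. The crux is to argue that $x$ must be a successor ordinal. I would write $x = \omega^{A_1} A_2$ via Theorem \ref{prime-factorization} and apply formula \eqref{eq:product-of-factorizations} to both sides of the equation $x(\omega+1) = (\omega+1)x$. Since $\omega+1$ is a successor of degree $1$, the left-hand side equals $\omega^{A_1}(A_2(\omega+1))$, which is itself a valid Jacobsthal factorization because $A_2(\omega+1)$ remains a successor. The right-hand side equals $(\omega+1)A_2$ if $A_1 = 0$, and $\omega^{1+A_1}A_2$ otherwise. In the latter case, uniqueness of Jacobsthal's factorization yields a contradiction in each subcase: either the limit parts already differ (when $A_1$ is a positive integer, since then $1+A_1 \neq A_1$) or they coincide (when $A_1 \geq \omega$) and the successor parts would have to satisfy $A_2(\omega+1) = A_2$, impossible since $\omega+1 > 1$. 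Hence $A_1 = 0$ and $x = A_2$ is a successor.

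With $x$ now known to be a successor commuting with $\omega+1$, Lemma \ref{le:commuting-successor-ordinals} produces an ordinal $\alpha$ and integers $j, r$ such that $\omega+1 = \alpha^j$ and $x = \alpha^r$. Primality of $\omega+1$ forces $j = 1$ and $\alpha = \omega+1$, so $x = (\omega+1)^r$ with $r$ a nonnegative integer, placing $x$ in $\Dom$.

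The only delicate step is pinning down $A_1 = 0$ via uniqueness of the Jacobsthal factorization; the rest of the argument is a direct application of Lemma \ref{le:commuting-successor-ordinals} combined with the primality of $\omega+1$.
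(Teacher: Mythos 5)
Your proof is correct and takes essentially the same route as the paper's: rule out a nontrivial limit part via the uniqueness of Jacobsthal's factorization (Theorem \ref{prime-factorization}) to conclude that $x$ is a successor, then apply Lemma \ref{le:commuting-successor-ordinals} together with the primality of $\omega+1$; indeed you spell out the case analysis behind ``$A_1=0$'' that the paper leaves implicit. The only (trivial) point worth stating explicitly is that the conjunct $x(\omega+1)\ne x$ excludes $x=0$ before the factorization theorem is invoked.
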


\begin{proof}
	If $x \in \Dom$ then it is clear that $\theta(x)$ holds. For the converse assume that $\theta(x)$ holds. The condition $x (\omega+1) \ne x$ implies $x \ne 0$. By Theorem \ref{prime-factorization},  $x$ has a unique factorization of the form $x=\omega^{A_1}A_2$ where $A_2$ is a successor ordinal. The equality $ x(\omega+1)= (\omega+1)x$ implies $\omega^{A_1}A_2(\omega+1)=(\omega+1)w^{A_1}A_2$. Using again  Theorem \ref{prime-factorization},  we can deduce that $A_1=0$, i.e. that  $x$ is a successor ordinal. Therefore we can apply Lemma \ref{le:commuting-successor-ordinals} to the equation $x(\omega+1)= (\omega+1)x$, which yields $x \in \Dom$.
\end{proof}

\begin{proposition}\label{prop:interpdiv}
For every existential closed formula $\exists x_1,\dots,x_n \phi(x_{1}, \ldots, x_{n})$ of $\langle \N; +, |,0,1\rangle$ 
there exists an  existential closed formula $\exists x_1,\dots,x_n\Phi(x_{1}, \ldots, x_{n})$ of $\Omega$  such that the following properties hold:
\begin{enumerate}
\item For all 
$i_{1}, \ldots, i_{n}\in \N$:
$$
\begin{array}{c}
\langle \N; +, |,0,1\rangle \models \phi(i_{1}, \ldots, i_{n})\\
\Leftrightarrow \\
\Omega  \models \Phi( (\omega+1)^{i_{1}}, \ldots, (\omega+1)^{i_{n}})
\end{array}
$$
\item For all ordinals $\alpha_{1}, \ldots, \alpha_{n}$ the condition 
$\Omega \models \Phi(\alpha_{1}, \ldots, \alpha_{n})$ implies that all $\alpha_i$'s are powers of $\omega+1$ {with integer exponent}. 

\end{enumerate} 
\end{proposition}

The proof of Proposition \ref{prop:interpdiv} relies on the possibility to interpret divisibility of integers in $\Omega$.

\begin{lemma}
\label{le:Div}
The predicate $\Div(x,y) =\{((\omega+1)^{i}, (\omega+1)^{j})\mid i  \text{ divides }   j\}$ 
is definable in $\Omega$ by a formula of the 
form  $\exists z,t\  E(x,y,z,t)$  
where $E$ is quantifier-free.
\end{lemma}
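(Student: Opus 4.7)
My plan is to exhibit a formula $\exists z, t\ E(x,y,z,t)$ that uses Lemma~\ref{le:indispensable-for-multiples} to encode, in the first witness $z$, a candidate quotient $k$ with $j = ik$, and then to convert the arithmetic identity $j = ik$ into an atomic identity about ordinal degrees. The key ingredients are the degree formula $\partial(\alpha\beta) = \partial(\alpha) + \partial(\beta)$, which follows directly from the multiplication rules of Section~\ref{s:preliminaries}, and the identity $\alpha \cdot \omega = \omega^{\partial(\alpha)+1}$ valid for every nonzero ordinal $\alpha$.

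I would proceed in three steps. First, impose $z \ne 0$ together with the commutation $z \cdot x(\omega^2+1) = x(\omega^2+1) \cdot z$; by Lemma~\ref{le:indispensable-for-multiples} applied with $n = i$, this forces $z = (x(\omega^2+1))^k$ for some $k \geq 0$, so that the integer $k$ is recorded as the number of occurrences of $(\omega^2+1)$ in the Jacobsthal form of $z$. Second, confine the auxiliary witness $t$ to a power of $(\omega^2+1)$ via $t(\omega^2+1) = (\omega^2+1) \cdot t$ (Lemma~\ref{le:indispensable-for-multiples} with $n = 0$), so that $t = (\omega^2+1)^m$ for some $m \geq 0$. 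Third, translate the divisibility into an equality of degrees: one computes $\partial((x(\omega^2+1))^k) = k(i+2) = ik + 2k$ while $\partial(y \cdot (\omega^2+1)^m) = j + 2m$, so imposing $z \cdot \omega = y \cdot t \cdot \omega$ yields the atomic relation $k(i+2) = j + 2m$, equivalently $j = ik + 2(k-m)$.

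The main obstacle is that this last equation alone does not yield $j = ik$: one still needs to enforce $m = k$. Expressing this link with only two existential variables and quantifier-free atomic equations is the heart of the proof, because the most natural candidates (such as $zt = tz$, $z = yt$, or $zy = tx$) match Jacobsthal forms only in degenerate situations and tend to collapse the formula to the much stronger equality $i = j$. My plan to overcome this is to exploit the rigidity of the Jacobsthal factorisation together with the left-cancellativity of multiplication: by adding a carefully chosen atomic equation relating $z$, $t$, $y$ and $x$ (of the form $u \cdot z = z' \cdot t$, with $u$ and $z'$ short products built from the available variables and constants), the counts of $(\omega^2+1)$-factors on both sides are forced to agree, so $m = k$. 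Soundness of the resulting formula then reduces to the degree computation above, and completeness is witnessed, whenever $i \mid j$, by taking $z = (x(\omega^2+1))^{j/i}$ and $t = (\omega^2+1)^{j/i}$, followed by direct Jacobsthal bookkeeping.
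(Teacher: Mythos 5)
There is a genuine gap, and it sits exactly at the point you yourself flag as ``the heart of the proof.'' In your setup the two witnesses give $z=(x(\omega^2+1))^k$ with $\partial(z)=k(i+2)$ and $t=(\omega^2+1)^m$, and the degree comparison $z\omega=yt\omega$ yields only $k(i+2)=j+2m$. You never exhibit the atomic equation that forces $m=k$; you only assert that a ``carefully chosen'' one exists, while conceding that the natural candidates fail. Without that equation the formula defines the relation ``$j=ik+2(k-m)$ for some $k,m\geq 0$,'' which is very far from $i\mid j$: for instance $i=3$, $j=1$ is accepted via $k=1$, $m=2$. Synchronizing the number of $(\omega^2+1)$-factors of $z$ with the exponent of $t$ by a single quantifier-free atomic identity is not a routine bookkeeping step, and I do not see how to carry out your plan; so the proposal does not prove the lemma as it stands.

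The paper avoids the problem altogether by a different encoding of the divisor. Instead of letting $z$ commute with $x(\omega^2+1)$ (whose degree is $i+2$ when $x=(\omega+1)^i$, which is the source of your parasitic $2k$), it uses the witness $t$ for a different purpose: the constraint $x=t(\omega+1)^2$ forces $t=(\omega+1)^{i}$ when $x=(\omega+1)^{i+2}$, and then $z$ is made to commute with $t(\omega^2+1)$, whose degree is $i+2$ --- \emph{exactly} the exponent of $x$. Lemma~\ref{le:indispensable-for-multiples} gives $z=((\omega+1)^{i}(\omega^2+1))^r$, so $\partial(z)=r(i+2)$, and the single atomic equation $y\omega=z\omega$ yields $j=r(i+2)$ with no correction term and no second counter to align. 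The divisors $0$ and $1$, unreachable in the form $i+2$, are handled by the explicit disjuncts $x=y=1$ and $x=\omega+1\wedge y(\omega+1)=(\omega+1)y$. If you want to salvage your write-up, the fix is to adopt this shift: encode the divisor as $(\omega+1)^{i+2}$ and recover $(\omega+1)^{i}$ via $x=t(\omega+1)^2$, rather than trying to subtract $2k$ after the fact.
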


\begin{proof}

Consider the following predicates 
\begin{align*}
A(x,y,t) &:&\theta(x) \wedge \theta(y) \wedge x=t(\omega+1)^{2},\\
B(z,t) &:&zt(\omega^2+1)=t(\omega^2+1)z,\\
C(y,z) &:&y\omega =z\omega.
\end{align*}

where $\theta$ is the formula of Lemma \ref{lem:dom}.
\begin{itemize}

\item $A(x,y,t)$ is equivalent to the existence of $i, j \geq 0$ such that
$x=(\omega+1)^{i+2}$,  $t=(\omega+1)^{i}$ and $y=(\omega+1)^{j}$. 

\item  Via Lemma \ref{le:indispensable-for-multiples}, $B(z,t)$ is equivalent to the existence of an integer $r$
such that $z=((\omega+1)^{i}(\omega^2+1))^{r}$.

\item $C(y,z)$ implies  that $j=(i+2) r$.

\end{itemize}

Set $D(x,y,z,t)= A(x,y,t) \wedge B(z,t) \wedge C(y,z)$.
For all ordinals  $\alpha, \beta$ we have
$$
\Omega \models \exists z,t\ D(\alpha, \beta,z,t)
  \text{ iff }   \alpha=(\omega+1)^{i+2}, \beta= (\omega+1)^{j}\text{ where }  
(i+2) \text{  divides }  j.
$$
It remains the case $i=1$ which divides whatever value of $j$  
and the case $i=j=0$. 
Then $\Div(x,y)$ can be defined in $\Omega$ by the formula 

\begin{multline}\label{eq:expressing-divides}
(x=\omega +1 \wedge y(\omega +1) =(\omega +1)y)  \vee(x=y=1)  
 \vee  \exists z,t\   D(x,y,z,t)\\
  \equiv \exists z,t\  E(x,y,z,t)
\end{multline}

where $E(x,y,z,t): \ (x=\omega +1)  \vee(x=y=1)  \vee  ( D(x,y,z,t))$
is quantifier-free.
\end{proof}

Now we can prove Proposition  \ref{prop:interpdiv}.

\begin{proof}
Any existential closed formula in $\langle \N; +, |,0,1\rangle$
is equivalent to a disjunction of formulas of the form
$$
\displaystyle \exists x_{1}, \ldots, x_{n}  \bigwedge_{k} \ell_{k} (x_{1}, \ldots, x_{n})| r_{k} (x_{1}, \ldots, x_{n})
$$
where $\ell_{k} (x_{1}, \ldots, x_{n})$ and $r_{k} (x_{1}, \ldots, x_{n})$ 
are linear expressions in the variables $x_{1}, \ldots, x_{n}$. 
Thus we can assume w.l.o.g. that $\phi(x_{1}, \ldots, x_{n})$ has the form $\bigwedge_{k} \ell_{k} (x_{1}, \ldots, x_{n})| r_{k} (x_{1}, \ldots, x_{n})$. 

We recall that the interpretation goes by  identifying the integer $i$ with the ordinal $(\omega +1)^{i}$. Any linear combination $i_{1} + \cdots + i_{p}+a$ is thus 
identified with the product 
$$
(\omega +1)^{i_{1}} \cdots (\omega +1)^{i_{p}}(\omega +1)^{a}
= (\omega +1)^{i_{1} +\cdots + i_{p}+a}.
$$

The formula $\Phi(x_{1}, \ldots, x_{n})$ is defined as $\bigwedge_{k} \gamma_k(x_{1}, \ldots, x_{n})$ where each $\gamma_k$ is a quantifier-free formula which translates in $\Omega$ the formula $\ell_{k} (x_{1}, \ldots, x_{n})| r_{k} (x_{1}, \ldots, x_{n})$.
The latter formula has the form 
$$
(x_{i_1} + \cdots + x_{i_p} + a)  | (x_{j_1} + \cdots + x_{j_q} +b)
$$
where $1 \leq i_1,\dots,i_p,j_1,\dots,j_q \leq n$ and $a,b$ are integers. 
We can define $\gamma_k$ as  
$$
\begin{array}{ll}
\displaystyle \exists x,y,z,t \   &( E(x,y,z,t) \wedge \displaystyle \bigwedge^{p}_{k=1} \theta(x_{i_{k}})
\wedge  \bigwedge^{q}_{\ell=1} \theta(x_{j_{\ell}})
\\
& \wedge\  x= x_{i_1} \cdots x_{i_p} (\omega +1)^{a} \wedge  y= x_{j_1} \cdots x_{j_q} (\omega +1)^{b})
\end{array}
$$
where $\theta$ is the formula of Lemma \ref{lem:dom}.
\end{proof}

Proposition  \ref{prop:interpdiv} gives  a lower bound for the satisfiability of the 
existential fragment of the theory $\Omega$.

\begin{corollary}
The satisfiability of the 
existential fragment of the theory 
$\Omega$
is at least as hard as that of the satisfiability of the 
existential fragment of the theory 
$\langle \N; +, |,0,1\rangle$ which is in \text{NEXPTIME} (see \emph{\cite{lechner}}).
\end{corollary}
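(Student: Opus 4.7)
The plan is to read the corollary as a direct consequence of Proposition~\ref{prop:interpdiv}: the proposition already produces, for each existential closed $\langle \N; +, |, 0, 1\rangle$-sentence $\psi = \exists x_1,\dots,x_n\,\phi(x_1,\dots,x_n)$, an existential closed $\Omega$-sentence $\Psi = \exists x_1,\dots,x_n\,\Phi(x_1,\dots,x_n)$, and I will simply verify that $\psi \mapsto \Psi$ is a many-one reduction from satisfiability of the first theory to satisfiability of the second.

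First I check correctness of the reduction in both directions. For the forward direction, suppose $\psi$ is satisfiable, witnessed by $(i_1,\dots,i_n)\in\N^n$. Then clause~(1) of Proposition~\ref{prop:interpdiv} gives $\Omega\models\Phi((\omega+1)^{i_1},\dots,(\omega+1)^{i_n})$, so $\Psi$ is satisfiable. For the backward direction, suppose $\Psi$ is satisfiable, witnessed by $(\alpha_1,\dots,\alpha_n)$ in $\omega^{\omega^\lambda}$. Clause~(2) of the proposition forces each $\alpha_i$ to be of the form $(\omega+1)^{j_i}$ with $j_i\in\N$, so clause~(1) applied in the converse direction yields $\langle \N;+,|,0,1\rangle\models \phi(j_1,\dots,j_n)$, hence $\psi$ is satisfiable. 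Clause~(2) is thus precisely what prevents ``spurious'' ordinal witnesses from making $\Psi$ satisfiable when $\psi$ is not.

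Next I argue that the reduction is efficient. Inspection of the proof of Proposition~\ref{prop:interpdiv} shows that each atomic divisibility constraint $\ell_k\mid r_k$ of $\phi$ is translated to a quantifier-free formula $\gamma_k$ of bounded structure, whose size is linear in the size of $\ell_k,r_k$ except for the occurrences of the constants $(\omega+1)^a$ and $(\omega+1)^b$ coming from the integer offsets $a,b$. If the integer constants of $\psi$ are written in binary, I handle this by introducing auxiliary existentially quantified variables $y_0,y_1,\dots$ defined by $y_0=\omega+1$ and $y_{k+1}=y_k\times y_k$, so that any $(\omega+1)^a$ with $a$ of bit-length $\ell$ can be expressed as a product of $O(\ell)$ of the $y_k$'s. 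This keeps $|\Psi|$ polynomial in $|\psi|$ and the whole construction polynomial time.

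Combining these two points yields a polynomial-time many-one reduction from the satisfiability problem for the existential fragment of $\langle \N;+,|,0,1\rangle$ to the satisfiability problem for the existential fragment of $\Omega$, which is exactly the content of the corollary. The only technical point worth flagging is the binary-versus-unary encoding of constants in the third paragraph; the iterated-squaring trick sketched there is standard and is the place where one must be careful if one wants the NEXPTIME lower bound of \cite{lechner} (which concerns the binary-encoded version) to transfer verbatim to $\Omega$.
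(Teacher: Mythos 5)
Your proof is correct and follows the same route as the paper, which states the corollary without any explicit proof as an immediate consequence of Proposition~\ref{prop:interpdiv}: clauses (1) and (2) of that proposition give exactly the two directions of the reduction you spell out. Your third paragraph, on keeping the translation polynomial-size when the integer constants of $\psi$ are written in binary (via iterated squaring of $\omega+1$ with fresh existential variables), addresses a point the paper silently glosses over and is a worthwhile refinement if one wants the NEXPTIME lower bound of \cite{lechner} to transfer verbatim.
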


\section{Undecidability of the $\Sigma_2-$fragment of 
$\langle \omega^{\omega^\lambda}; \times\rangle$}
\label{s:undecidability}

Our objective is to interpret the Diophantine fragment of the nonnegative integers
 in the simplest fragment of  $\Omega$  as possible.
This is achieved in two steps. First by expressing in $\Omega$ {the
least common multiple of two integers, abbreviated  \lcm,} in terms of divisibility, and then 
{by expressing} the multiplication in terms of \lcm. 

As a corollary we obtain undecidability of  the $\Sigma_2-$fragment of $\Omega$ and 
finally  of 
$\langle \omega^{\omega^\lambda}; \times \rangle$ by removing the constants.

\medskip By using the same construction as in the proof of Proposition \ref{prop:interpdiv} (where we identify the integer $i$ with the ordinal $(\omega +1)^{i}$) we are led to interpret the  function $\lcm$ of integers 
in $\Omega$ and to  investigate the complexity of this predicate.
We use the  $\Sigma_{1}$-predicate  $\Div(x,y)$ introduced in Lemma \ref{le:Div}
{and set}
$$
\LCM(x,y,z)=\{((\omega+1)^{i}, (\omega+1)^{j},  (\omega+1)^{k} ) \mid k=\lcm(i,j) \}.
$$

\begin{lemma}\label{le:interp-lcm}
The predicate  $\LCM(x,y,z)$ is definable in $\Omega$ by a  $\exists^{4} \forall^{6}-$formula.

\end{lemma}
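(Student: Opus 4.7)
The plan is to use the logical characterization $k = \lcm(i, j)$ iff $i \mid k$, $j \mid k$, and every common multiple $m$ of $i$ and $j$ satisfying $m \leq k$ already equals $k$. This is an easy reformulation of the standard definition, relying on the fact that the least common multiple is the minimum of the set of common multiples under the natural order. The critical gain over the standard characterization (replacing ``$k \mid m$'' in the inner statement by the weaker ``$m \leq k$'') is that ordinal left-divisibility---which on elements of $\Dom$ coincides with integer comparison of the exponents---is expressible by a single existential, $\exists u\ (mu = z)$, whereas the integer-divisibility predicate $\Div(m, z)$ from Lemma~\ref{le:Div} costs two.

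Concretely, I would define $\LCM(x, y, z)$ by the formula
\[
\Div(x, z) \wedge \Div(y, z) \wedge \forall m\ \bigl[(\Div(x, m) \wedge \Div(y, m) \wedge \exists u\ (mu = z)) \Rightarrow m = z\bigr].
\]
Substituting each $\Div(\cdot, \cdot)$ by $\exists z, t\ E(\cdot, \cdot, z, t)$ from Lemma~\ref{le:Div} and pulling the three antecedent existentials of the implication out as universals, one arrives at the prenex form
\[
\exists z_1, t_1, z_2, t_2\ \forall m, z_3, t_3, z_4, t_4, u\ \Psi,
\]
where $\Psi$ is quantifier-free: the outer $\exists^{4}$ witnesses $\Div(x, z)$ and $\Div(y, z)$, while the $\forall^{6}$ consists of $m$ together with the two pairs $(z_3, t_3)$, $(z_4, t_4)$ witnessing $\Div(x, m)$ and $\Div(y, m)$ and the single $u$ witnessing $mu = z$. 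Since the consequent $m = z$ is quantifier-free, no further existentials appear, and the total count is exactly $\exists^{4} \forall^{6}$.

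The main verification task is to check that the characterization holds on the intended interpretation. For $m, z \in \Dom$ with $m = (\omega+1)^\mu$ and $z = (\omega+1)^\kappa$, the equation $mu = z$ is solvable iff $\mu \leq \kappa$ (the only solution is $u = (\omega+1)^{\kappa - \mu}$, which is verified by an easy degree computation using Definition~\ref{de:product} to rule out any $u \notin \Dom$). Edge cases where $m \notin \Dom$ or $m \in \{1, \omega+1\}$ are absorbed automatically: either the antecedent $\Div(x, m) \wedge \Div(y, m)$ collapses (for generic $x, y$) and the implication holds vacuously, or the special disjuncts built into $E$ take over. The heart of the argument is then the straightforward integer-arithmetic equivalence mentioned above, lifted to $\Omega$ via the interpretation $i \mapsto (\omega+1)^i$ that was already used in Proposition~\ref{prop:interpdiv}.
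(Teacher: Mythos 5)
Your formula is, up to contraposition, the paper's own: the paper uses $\Div(x,z)\wedge\Div(y,z)\wedge\forall u\,\bigl((\Div(x,u)\wedge\Div(y,u)\wedge u\omega\ne z\omega)\rightarrow\forall t\,(tu\omega\ne z\omega)\bigr)$, encoding the comparison of exponents by the degree test $\exists t\,(tu\omega=z\omega)$ where you use ordinal left-divisibility $\exists u\,(mu=z)$, and in both cases the single order-witness sits in antecedent position so that it turns universal, yielding the identical $\exists^{4}\forall^{6}$ bookkeeping. The one point where your ``edge cases are absorbed automatically'' claim is wrong is $m=1$: the ordinal $1$ encodes the integer $0$ and $\Div(x,1)$ holds for every $x\in\Dom$ (take $r=0$ in the witness $z$ of Lemma~\ref{le:Div}), so the instance $m=1$ of your universal clause forces $z=1$ --- but the paper's formula fails at $u=1$ for exactly the same reason, and both are repaired by the harmless quantifier-free patch of adding $m\ne 1$ to the antecedent.
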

\begin{proof}
The predicate $\LCM(x,y,z)$ is definable by the formula 
$$
\begin{array}{l}
\Div(x,z) \wedge \Div(y,z) \wedge \forall u ((\Div(x,u) \wedge \Div(y,u) \wedge u\omega \ne z \omega)\\
 \rightarrow \forall t (tu\omega \ne z \omega)).
\end{array}
$$
Indeed, this formula can be interpreted as saying that $x=(\omega+1)^{i}, y=(\omega+1)^{j}$
and $z=(\omega+1)^{k}$ where $i$ and $j$ divide $k$ and  all integers $\ell\not=k$ 
divisible by $i$ and $j$ are greater than $k$. Substitute expression \ref{eq:expressing-divides}
for each occurrence of $\Div$.
The following  
sequence of equivalences leads to a complexity in $\exists^{4}\ \forall^{6} $, namely
$$
\begin{array}{l}
\LCM(x,y,z)\equiv \\
\exists u_{1}, u_{2}\ E(x,z,u_{1},u_{2}) \wedge \exists u_{3},u_{4}\ E(y,z,u_{3},u_{4})\\
 \wedge \forall v ((\exists v_{1},v_{2}\ E(x,v,v_{1},v_{2}) \wedge \exists v_{3},v_{4}\  E(y,v,v_{3},v_{4}) \wedge v\omega \ne z \omega)\\
 \rightarrow \forall v_{5} (v_{5} v \omega \ne z \omega))\\
 \equiv  \exists u_{1}, u_{2}\ E(x,z,u_{1},u_{2}) \wedge \exists u_{3},u_{4}\ E(y,z,u_{3},u_{4})\\
 \wedge \forall v ( \forall v_{1},v_{2}\ (\neg E(x,v,v_{1},v_{2})) \vee \forall v_{3},v_{4}\  (\neg E(y,v,v_{3},v_{4})) \vee (v\omega = z \omega)\\
 \vee \forall v_{5} (v_{5}v\omega \ne z \omega))\\
 \equiv  \exists u_{1}, u_{2}, u_{3},u_{4}\ (E(x,z,u_{1},u_{2}) \wedge  \ E(y,z,u_{3},u_{4}))\\
 \wedge \forall v, v_{1}, v_{2}, v_{3}, v_{4}, v_{5} \ (\neg E(x,v,v_{1},v_{2}) \vee   \neg E(y,v,v_{3},v_{4}) \vee (v\omega = z \omega)\\
 \vee (v_{5}v\omega \ne z \omega))\\
 = \exists^{4} \overrightarrow{u}\ \forall^{6} \overrightarrow{v}\  F(x,y,z, \overrightarrow{u}, \overrightarrow{v})
\end{array}
$$
where $F$ is quantifier-free.

\end{proof}
Multiplication is expressible
in terms of the least common multiple. Indeed, a simple computation 
shows 
\begin{equation}
\label{eq:times-as-lcm}
  x\times y = \frac{1}{2}(\lcm(x+y,x+y+1)- \lcm(x,x+1)- \lcm(y,y+1)).
\end{equation}

We are ready to show that the multiplication can be interpreted in $\Omega$
at a low cost. We set 
$$
\MULT(x,y,z)=\{((\omega+1)^{i}, (\omega+1)^{j},  (\omega+1)^{k} ) \mid i \times j = k \}.
$$

\begin{lemma}
\label{le:product}
The predicate $\MULT(x,y,z)$ is definable in $\Omega$ by a $\exists^{15} \forall^{6}-$formula.
\end{lemma}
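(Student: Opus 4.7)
The plan is to lift the integer identity~(\ref{eq:times-as-lcm}) to the ordinal side via the interpretation $i \leftrightarrow (\omega+1)^i$. Under this bijection, integer addition $i+j$ corresponds to ordinal multiplication $(\omega+1)^{i}\cdot(\omega+1)^{j}$, and successor $n+1$ corresponds to right-multiplication by $\omega+1$. Rewriting~(\ref{eq:times-as-lcm}) in the cleared form
\[
2ij + \lcm(i,i+1) + \lcm(j,j+1) \;=\; \lcm(i+j,\, i+j+1),
\]
it becomes, on the ordinal side, the single multiplicative equation $z\cdot z\cdot A\cdot B = C$, where $A,B,C$ are the $(\omega+1)$-powers representing the three least common multiples.

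Accordingly, I would define
\[
\MULT(x,y,z) \;\equiv\; \exists A, B, C\, \bigl[\LCM(x,\, x(\omega+1),\, A) \wedge \LCM(y,\, y(\omega+1),\, B) \wedge \LCM(xy,\, xy(\omega+1),\, C) \wedge z\cdot z\cdot A\cdot B = C\bigr].
\]
Correctness reduces to a one-line check: since $\lcm(n,n+1)=n(n+1)$ for consecutive integers, the exponent identity $2k+i(i+1)+j(j+1)=(i+j)(i+j+1)$ is equivalent to $k=ij$. The fact that $x,y,z$ must be powers of $\omega+1$ with finite exponent is enforced automatically by the $\LCM$ predicates (which inherit this restriction from $\Div$), and it guarantees that all ordinals appearing on the left of the final equation pairwise commute via Lemma~\ref{le:commuting-successor-ordinals}, so the equation is unambiguous.

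For the quantifier count, Lemma~\ref{le:interp-lcm} gives each $\LCM$ predicate the prenex shape $\exists^4\forall^6 F$ with $F$ quantifier-free. Adding the three witnesses $A,B,C$ and pulling all existentials to the front yields $3 + 3\cdot 4 = 15$ existential variables. For the universals, the three $\LCM$ instances contribute three independent $\forall^{6}$ blocks; the logical equivalence
\[
\forall \vec{v}\, G_1(\vec{v}) \wedge \forall \vec{v}\, G_2(\vec{v}) \;\equiv\; \forall \vec{v}\, \bigl(G_1(\vec{v}) \wedge G_2(\vec{v})\bigr),
\]
applied after renaming the bound variables of each $\LCM$ instance to a common set of six names, collapses the three blocks into a single $\forall^{6}$ block (and the quantifier-free conjunct $z\cdot z\cdot A\cdot B = C$ can be absorbed inside it). This yields exactly the claimed $\exists^{15}\forall^{6}$ complexity.

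The main subtlety is this universal-variable sharing step, which is what saves us from the naive bound of $\forall^{18}$; the algebraic identity itself is elementary, and the term $xy$ appearing inside the third $\LCM$ needs no extra quantifier, being merely the product of the free variables $x$ and $y$.
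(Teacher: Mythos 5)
Your proof is correct and is essentially identical to the paper's: the same defining formula $\exists r_1,r_2,r_3\,(z\cdot z\cdot r_1\cdot r_2=r_3\wedge\LCM(x,x(\omega+1),r_1)\wedge\LCM(y,y(\omega+1),r_2)\wedge\LCM(xy,xy(\omega+1),r_3))$, the same count $3+3\cdot 4=15$ existentials, and the same merging of the three $\forall^{6}$ blocks into one shared block of universal variables. The only difference is that you spell out the verification of the exponent identity and the commutation issue slightly more explicitly than the paper does.
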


\begin{proof}
In view of expression \ref{eq:times-as-lcm} we can define $\MULT(x,y,z)$ as follows:
$$
\begin{array}{l}
\exists r_{1} r_{2} r_{3}\ ( z\times z\times  r_{1}\times r_{2} = r_{3} \wedge \LCM(x\times  y,  x\times  y  (\omega+1), r_{3})  \wedge 
 \LCM(x, x  (\omega+1), r_{1})\\
\phantom{xxxxxxx}   \wedge \LCM(y, y(\omega+1), r_{2})).
  \end{array}
$$
Using the expression given in the proof of Lemma \ref{le:interp-lcm} we get
$$
\begin{array}{l}
\exists r_{1} r_{2} r_{3}\  ((z\times z\times  r_{1}\times r_{2} = r_{3}) \wedge (\exists^{4} \overrightarrow{u_{3}}\ \forall^{6} \overrightarrow{v_{3}}
F(x\times  y,  x\times  y \times  (\omega+1), r_{3}, \overrightarrow{u_{3}}, \overrightarrow{v_{3}}))\\
  \wedge (\exists^{4} \overrightarrow{u_{1}}\ \forall^{6} \overrightarrow{v_{1}}
F(x, x \times  (\omega+1), r_{1}, \overrightarrow{u_{1}}, \overrightarrow{v_{1}}))
  \wedge (\exists^{4} \overrightarrow{u_{2}}\ \forall^{6} \overrightarrow{v_{2}}
F(y,y \times  (\omega+1), r_{2}, \overrightarrow{u_{2}}, \overrightarrow{v_{2}}))\\
\equiv   
\exists r_{1} r_{2} r_{3}\  \exists^{4} \overrightarrow{u_{1}}\
 \exists^{4} \overrightarrow{u_{2}}\ \exists^{4} \overrightarrow{u_{3}}\
  \forall^{6} \overrightarrow{v}\\
\quad     ((z\times z\times  r_{1}\times r_{2} = r_{3})  \wedge
F(x\times  y,  x\times  y  (\omega+1), r_{3}, \overrightarrow{u_{3}}, \overrightarrow{v})\\
\quad  \wedge 
F(x, x \times  (\omega+1), r_{1}, \overrightarrow{u_{1}}, \overrightarrow{v})
  \wedge 
F(y,y \times  (\omega+1), r_{2}, \overrightarrow{u_{2}}, \overrightarrow{v}))\end{array}
$$
\end{proof}

We wish to express in $\Omega$ all terms of $\langle \N;+, \times, 0,1\rangle$. We start with terms $W$ which are products of variables. 

\begin{definition}
Let  $W(x_{1}, \ldots, x_{k})$ be a term of $\langle \N;+, \times, 0,1\rangle$ which consists {of} a product of occurrences 
among of the
variables $x_{1}, \ldots, x_{k}$. We set 
$$
\TERM_W(y,x_1,\dots,x_k)=\{((\omega+1)^{i}, (\omega+1)^{j_1},\dots,  (\omega+1)^{j_k} ) \mid i=W(j_1,\dots,j_k) \}.
$$
\end{definition}

\begin{lemma}\label{le:interp-terms}
For each term $W(x_{1}, \ldots, x_{k})$  which consists {of} a product of occurrences 
among of the
variables $x_{1}, \ldots, x_{k}$, the predicate $\TERM_W(y,x_1,\dots,x_k)$ is definable in $\Omega$ by a formula of the form
$$
\exists^{m} \overrightarrow{u}\ \forall^{6} \overrightarrow{v}\  
G(y, x_{1}, \ldots, x_{k} , \overrightarrow{u}, \overrightarrow{v})
$$
where $m$ depends on the term, and $G$ is quantifier-free.

\end{lemma}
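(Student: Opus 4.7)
The plan is to proceed by induction on the length $n$ of the product $W(x_1,\dots,x_k) = x_{i_1}\cdot x_{i_2}\cdots x_{i_n}$, using $\MULT$ from Lemma \ref{le:product} as the basic building block and introducing one intermediate existential variable per multiplication. The single-variable case $n=1$ is trivial: $\TERM_{x_i}(y,x_1,\dots,x_k)$ is defined by the quantifier-free formula $y = x_i$, which we may freely cast in the shape $\forall^6 \overrightarrow{v}\ (y=x_i)$ by padding with dummy universal quantifiers not occurring in the body. For the base case $n=2$, the predicate $\TERM_{x_i \cdot x_j}(y,x_1,\dots,x_k)$ is nothing but $\MULT(x_i,x_j,y)$, which by Lemma \ref{le:product} is $\exists^{15}\forall^{6}$.

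For the inductive step, assume that $W'=x_{i_1}\cdots x_{i_{n-1}}$ admits a defining formula of the form $\exists^{m'} \overrightarrow{u'}\ \forall^{6} \overrightarrow{v}\ G'(y', x_{1}, \ldots, x_{k}, \overrightarrow{u'}, \overrightarrow{v})$, and write $W = W'\cdot x_{i_n}$. Then $\TERM_W(y,x_1,\dots,x_k)$ is equivalent to
\[
\exists y'\ \bigl(\TERM_{W'}(y',x_1,\dots,x_k) \wedge \MULT(y',x_{i_n},y)\bigr).
\]
Expanding $\TERM_{W'}$ by induction and $\MULT$ by Lemma \ref{le:product} yields a conjunction of two formulas, each of prefix type $\exists^*\forall^6$. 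The key step is then to merge the two universal blocks into a single $\forall^{6}$ block using the equivalence $(\forall\overrightarrow{v}\, A)\wedge(\forall\overrightarrow{v}\, B)\equiv\forall\overrightarrow{v}\,(A\wedge B)$, exactly as is done in the proofs of Lemmas \ref{le:interp-lcm} and \ref{le:product}: we rename the two sets of universal variables to a common tuple $\overrightarrow{v}$ of length $6$, and then pull all existentials (the new witness $y'$, the existential tuple $\overrightarrow{u'}$ from the induction hypothesis, and the $15$ existential variables from $\MULT$) to the front. This produces a formula of the required shape
\[
\exists^{m}\overrightarrow{u}\ \forall^{6}\overrightarrow{v}\ G(y,x_{1},\ldots,x_{k},\overrightarrow{u},\overrightarrow{v})
\]
with $m = m' + 1 + 15$, and $G$ quantifier-free. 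Starting from $m=0$ in the base case $n=1$ (or $m=15$ in the base case $n=2$), one obtains $m = 16(n-1)$ for a product of length $n$.

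There is essentially no conceptual obstacle: the correctness of the definition is immediate from the semantics of $\MULT$ (since multiplication of natural numbers is associative, chaining binary products computes the iterated product $W(j_1,\dots,j_k)$), and the existential witnesses $y'$ for intermediate partial products always exist and are uniquely forced to be of the form $(\omega+1)^{j_{i_1}+\cdots+j_{i_{n-1}}}$ by clause (2) of Proposition \ref{prop:interpdiv} applied inductively. The only point requiring care is the bookkeeping of the universal block, ensuring that the six variables used inside each $\MULT$-instance can be reused across all instances; this is legitimate precisely because the $\forall^{6}$ prefix in Lemma \ref{le:product} quantifies fresh, locally-bound variables, so renaming them to a common tuple preserves equivalence.
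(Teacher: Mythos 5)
Your proof is correct and follows the same overall strategy as the paper's: induction on the length of the product, with the single-variable base case padded by dummy quantifiers, and at each step a merge of the two $\forall^{6}$ blocks into one followed by pulling all existentials to the front. There is, however, one substantive difference that works in your favour. The paper's printed inductive step splits $W=W_1\times W_2$ and combines the witnesses with the ordinal equation $y=y_1y_2$; but since $(\omega+1)^{i}(\omega+1)^{j}=(\omega+1)^{i+j}$, that equation encodes \emph{addition} of exponents and would define $\TERM_{W_1+W_2}$ rather than $\TERM_{W_1\times W_2}$. What is actually needed is exactly what you wrote: the conjunct $\MULT(y',x_{i_n},y)$ from Lemma \ref{le:product}, which is presumably why that lemma is stated immediately beforehand and which costs the extra $15$ existentials and one more $\forall^{6}$ block per step (harmless, since all universal blocks have length $6$ and merge). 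Your bookkeeping $m=16(n-1)$ is consistent with this and satisfies the lemma's claim that $m$ depends only on the term. The only minor imprecision is your appeal to clause (2) of Proposition \ref{prop:interpdiv} to argue that the intermediate witnesses are forced to be powers of $\omega+1$; the correct justification is that the formula defining $\MULT$ (through $\LCM$, $\Div$ and $\theta$) already constrains its arguments to lie in $\Dom$, but this is a matter of citation rather than a gap, and the paper is no more explicit on this point.
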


\begin{proof}
We proceed by induction on the length of the product.
If the product is reduced to a single variable then by using dummy variables 
it is always possible to put the equality $y=x$ in the appropriate form. 
For the induction step, consider two terms  $W_{1}(x_{1}, \ldots, x_{k})$ and $W_{2}(x_{1}, \ldots, x_{k})$, and assume that $\TERM_{W_1}(y,x_1,\dots,x_k)$ and $\TERM_{W_2}(y,x_1,\dots,x_k)$ are definable in $\Omega$ by the formulas
$$
 \exists^{m_{1}} \overrightarrow{u_{1}}\ \forall^{6} \overrightarrow{v_{1}}\  
G_{1}(y_{1}, x_{1}, \ldots, x_{k} , \overrightarrow{u_{1}}, \overrightarrow{v_{1}})$$ and
$$\exists^{m_{2}} \overrightarrow{u_{2}}\ \forall^{6} \overrightarrow{v_{2}}\  
G_{2}(y_{2}, x_{1}, \ldots, x_{k} , \overrightarrow{{u_{2}}}, \overrightarrow{v_{2}}),
$$
respectively, where the bound variables which are the components of
$u_{1}$ and $u_{2}$ are pairwise different.
Then $\TERM_{W_1 \times W_2}(y,x_1,\dots,x_k)$ can be defined by the formula: 
\begin{multline}\label{eq:prodW}
	\exists y_{1}, y_{2}\   ((y= y_{1} y_{2}) \wedge
	\exists^{m_{1}} \overrightarrow{u_{1}}\ \forall^{6} \overrightarrow{v_{1}}\  
	G_{1}(y_{1}, x_{1}, \ldots, x_{k} , \overrightarrow{u_{1}}, \overrightarrow{v_{1}})\\
	\wedge \exists^{m_{2}} \overrightarrow{u_{2}}\ \forall^{6} \overrightarrow{v_{2}}\  
	G_{2}(y_{2}, x_{1}, \ldots, x_{k} , \overrightarrow{u_{2}}, \overrightarrow{v_{2}})).
\end{multline}	
	
By applying the rule $\exists x\  f(x, \ldots) \wedge \exists y\  g(y, \ldots)\equiv
\exists x\  f(x, \ldots) \wedge   g(x, \ldots)$  provided $x$ and $y$ are different, $y$ does not occur 
in $f$ and $x$ in $g$, and the rule $\forall x\  f(x, \ldots) \wedge \forall y\  g(y, \ldots)\equiv
\forall x\  (f(x, \ldots) \wedge   g(x, \ldots))$), the formula \ref{eq:prodW}  is equivalent to 
\begin{multline}
\exists y_{1}, y_{2}\ \exists^{m_{1}} \overrightarrow{u_{1}}\ \exists^{m_{2}} \overrightarrow{u_{2}}\
\forall^{6} \overrightarrow{v_{1}} ((y= y_{1} y_{2}) \wedge\\
G_{1}(y_{1}, x_{1}, \ldots, x_{k} , \overrightarrow{u_{1}}, \overrightarrow{v_{1}})  \wedge G_{2}(y_{2}, x_{1}, \ldots, x_{k} , \overrightarrow{u_{2}}, \overrightarrow{v_{1}})).
\end{multline}
\end{proof}

The ultimate goal is to apply Matjasevich undecidability result for the Diophantine fragment
of the integers with addition and multiplication \cite{Mat70} to show that the 
$\exists^*\ \forall^{6}$-fragment of $\Omega$ is undecidable.

\begin{definition}
	Let $E(x_{1}, \ldots, x_{p})$ be an atomic formula of $\langle \N; +, \times, 0,1\rangle$ (i.e. a Diophantine equation in $\N$). We set 
$$
\EQ_E(x_1,\dots,x_p)=\{( (\omega+1)^{j_1},\dots,  (\omega+1)^{j_p} ) \mid \langle \N; +, \times, 0,1\rangle \models E(j_1,\dots,j_p) \}.
$$

\end{definition}

\begin{theorem}
For each atomic formula $E(x_{1}, \ldots, x_{p})$ of  $\langle \N; +, \times,  0,1\rangle$, the predicate $\EQ_E(x_1,\dots,x_p)$ is definable in $\Omega$ by a
$\exists^*\forall^{6}$-formula.
\end{theorem}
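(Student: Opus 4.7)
The plan is to reduce an arbitrary Diophantine atom $E : t_1(\overline{x}) = t_2(\overline{x})$ to a conjunction of elementary equations, translate each one into $\Omega$ via the interpretation $i \mapsto (\omega+1)^i$, and then collapse all universal blocks into a single shared $\forall^6$-block.

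First I would flatten the terms $t_1, t_2$ by introducing a fresh witness variable for every subterm. The equation $E$ thereby becomes logically equivalent to an existentially quantified conjunction of elementary equations, each of the shape $y = 0$, $y = 1$, $y = u + v$, or $y = u \times v$, together with a final top-level equality identifying the two root witnesses. Each elementary equation is then translated into $\Omega$: $y = 0$ becomes the quantifier-free atom $y = 1$ in $\Omega$; $y = 1$ becomes $y = \omega + 1$; $y = u + v$ becomes the quantifier-free atom $y = u \cdot v$, since $(\omega+1)^{u+v} = (\omega+1)^{u} \cdot (\omega+1)^{v}$; and $y = u \times v$ becomes $\MULT(u, v, y)$, which by Lemma \ref{le:product} is $\exists^{15}\forall^{6}$-definable.

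Next I would add the domain conditions $\theta(x_1) \wedge \cdots \wedge \theta(x_p)$, quantifier-free by Lemma \ref{lem:dom}, forcing each free variable to lie in $\Dom$. No extra $\theta$-clause is needed for the auxiliary witnesses: the constants $1$ and $\omega+1$ belong to $\Dom$; $\Dom$ is closed under ordinal multiplication since $(\omega+1)^{i} \cdot (\omega+1)^{j} = (\omega+1)^{i+j}$; and the third argument of $\MULT$ is by definition of the form $(\omega+1)^{k}$. So a straightforward induction on term structure shows that every witness is automatically in $\Dom$ whenever the translated equations hold.

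Finally I would prenex the whole conjunction. Pulling the existentials of each $\MULT$ instance to the front yields a long existential prefix followed by a conjunction of clauses, some quantifier-free and the rest of the form $\forall^{6} \overrightarrow{v}\, F$. Renaming each such universal block to share the same six variables and applying $\forall\overrightarrow{v}A \wedge \forall\overrightarrow{v}B \equiv \forall\overrightarrow{v}(A \wedge B)$ collapses the universal part into a single $\forall^{6}$-block, exactly as in Lemmas \ref{le:interp-lcm}, \ref{le:product}, and \ref{le:interp-terms}. The result is an $\exists^{*}\forall^{6}$-formula defining $\EQ_{E}$. The main obstacle is really just bookkeeping: verifying that the six universal variables can be safely shared across different $\MULT$-subformulas without capture (they occur only inside their own $F$-scope), and tracking that the witnesses remain in $\Dom$ without spawning new universal clauses. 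Once these routine checks are done, correctness of the interpretation follows directly from the translation rules and from the already proved semantics of $\MULT$.
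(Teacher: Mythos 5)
Your proof is correct, and it rests on the same two pillars as the paper's: the interpretation $i\mapsto(\omega+1)^i$ with $\MULT$ (Lemma \ref{le:product}) doing all the multiplicative work, and the collapsing of the many $\forall^{6}$-blocks into a single shared one by renaming. But your decomposition of the atomic formula is genuinely different. The paper first normalizes $E$ into a sum-of-monomials equation $W_{1}+\cdots+W_{n}=W_{n+1}+\cdots+W_{n+m}$, proves by induction on the length of each monomial that $\TERM_{W_i}$ is $\exists^{*}\forall^{6}$-definable (Lemma \ref{le:interp-terms}), and then renders the two sums as the single ordinal product equation $y_{1}\times\cdots\times y_{n}=y_{n+1}\times\cdots\times y_{n+m}$; you instead flatten the two terms into three-address form, translating $y=u+v$ directly as the quantifier-free atom $y=u\cdot v$ and $y=u\times v$ as $\MULT(u,v,y)$, which bypasses Lemma \ref{le:interp-terms} entirely. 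Your route handles the constants $0,1$ and integer coefficients more explicitly than the paper's monomial normal form, and it makes the domain bookkeeping visible: you add $\theta(x_{i})$ for the free variables (Lemma \ref{lem:dom}) and argue that all witnesses stay in $\Dom$ automatically, via closure of $\Dom$ under ordinal multiplication and the exactness of the $\MULT$ definition, whereas the paper delegates this entirely to the exactness of the $\TERM_{W_i}$ definitions. The price is a somewhat larger existential prefix (one witness per subterm rather than one per monomial), which is irrelevant for an $\exists^{*}\forall^{6}$ bound. Both arguments are sound given the cited lemmas.
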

\begin{proof} The equation $E(x_1,\dots,x_p)$ can be written as
$$
W_{1}(x_{1}, \ldots, x_{p}) + \cdots + W_{n}(x_{1}, \ldots, x_{p})=W_{n+1}(x_{1}, \ldots, x_{p}) + \cdots + W_{n+m}(x_{1}, \ldots, x_{p})
$$ 
where each $W_{i}$ is a product of occurrences of variables among $x_{1}, \ldots, x_{p}$.
By Lemma \ref{le:interp-terms} for $i=1, \ldots,  n+m$ 
there exists a formula 
$$
\Phi_{i}\equiv \exists^{n_{i}} \overrightarrow{u_{i}}\ \forall^{6} \overrightarrow{v_{i}}\phi_{i}(y_{i}, x_{1}, \ldots, x_{p}, \overrightarrow{u}, \overrightarrow{v})
$$ 
which defines $\TERM_{W_i}(y_{i}, x_{1}, \ldots, x_{p})$ in $\Omega$.
Set $\Phi(x_{1}, \ldots, x_{p})$ as the formula
\begin{multline}
\exists y_{1},  \cdots, y_{n +m}  
\displaystyle  \big(\bigwedge^{n+m}_{i}  \exists \overrightarrow{u_{i}}\ 
\forall^{6} \overrightarrow{v_{i}}\ 
\Phi_{i}(y_{i}, x_{1}, \ldots, x_{p}, \overrightarrow{u_{i}}, \overrightarrow{v_{i}})\big)\\
\wedge (y_{1} \times \cdots \times y_{n}= y_{n+1} \times \cdots \times y_{n+m}).
\end{multline}

It is clear that $\Phi(x_{1}, \ldots, x_{p})$ defines $\EQ_E(x_1,\dots,x_p)$ in $\Omega$.
By routine rewriting on quantifiers and renaming of bound variables,  the above formula is equivalent to
$$
\begin{array}{l}
 \exists y_{1},  \cdots, y_{n +m}  \exists \overrightarrow{u_{1}}\ \ldots \exists \overrightarrow{u_{n+m}}\ \forall \overrightarrow{v}\\
\displaystyle  \big(\bigwedge^{n+m}_{i}  \Phi_{i}(y_{i}, x_{1}, \ldots, x_{p}, \overrightarrow{u_{i}}, \overrightarrow{v})\big)
 \wedge (y_{1} \times \cdots \times y_{n}= y_{n+1} \times \cdots \times y_{n+m})
\end{array}
$$
which is a formula of complexity $\exists^*\ \forall^{6}$.
\end{proof}

\begin{corollary}
The $\exists^*\forall^{6}-$fragment of 
$\langle \omega^{\omega^\lambda}; \times \rangle$ is undecidable for every ordinal $\lambda \geq 1$.
\end{corollary}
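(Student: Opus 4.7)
I would proceed in two stages. First, the preceding theorem together with Matiyasevich's resolution of Hilbert's Tenth Problem \cite{Mat70} already shows that the $\exists^*\forall^{6}$-fragment of $\Omega = \langle \omega^{\omega^\lambda}; \times, 1, \omega, \omega+1, \omega^{2}+1\rangle$ is undecidable: given a Diophantine equation $E(x_{1}, \ldots, x_{p})$, the theorem produces an $\exists^{*}\forall^{6}$-formula defining $\EQ_E$, and $E$ has a nonnegative integer solution iff $\Omega \models \exists x_{1}, \ldots, x_{p}\, \EQ_E(x_{1}, \ldots, x_{p})$, since the interpretation sends $i \mapsto (\omega+1)^{i}$. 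So the only remaining task is to remove the four constants from the signature without inflating the quantifier complexity beyond $\exists^{*}\forall^{6}$.

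For this elimination I would invoke Proposition \ref{prop:elemdef}: each of the four constants $1, \omega, \omega+1, \omega^{2}+1$ is definable in $\langle \omega^{\omega^\lambda}; \times \rangle$ by a formula whose prefix is at most $\exists^{2}\forall^{2}$. Given an $\exists^{*}\forall^{6}$-sentence $\sigma$ of $\Omega$, introduce fresh existentially quantified variables $c_{1}, c_{\omega}, c_{\omega+1}, c_{\omega^{2}+1}$ and substitute each occurrence of a constant by its corresponding variable; then conjoin with the defining formulas $\texttt{One}(c_{1}), \texttt{Omega}(c_{\omega}), \texttt{OmegaPlusOne}(c_{\omega+1}), \texttt{OmegaSquarePlusOne}(c_{\omega^{2}+1})$ of Proposition \ref{prop:elemdef}. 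All the auxiliary existential variables appearing inside these defining formulas, together with the four new $c$'s, are absorbed harmlessly into the outer $\exists^{*}$ block of $\sigma$.

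The delicate point, which I expect to be the main obstacle, is handling the universal quantifiers from the four defining formulas while preserving the bound of $6$ universal variables: a naive prenexation would produce four extra $\forall^{2}$ blocks on top of the original $\forall^{6}$, giving $\forall^{14}$. The resolution is a variable-reuse trick. Since each defining formula contains at most two universal variables, I identify them with (say) $v_{1}, v_{2}$ from the original universal block $\forall v_{1} \cdots v_{6}$ of $\sigma$. Concretely, if $\chi(y_{1}, y_{2})$ denotes (one of) the universal matrices of a definition and $\psi(v_{1}, \ldots, v_{6})$ the universal matrix of $\sigma$, then
\[
\forall y_{1} y_{2}\, \chi(y_{1},y_{2}) \;\wedge\; \forall v_{1} \cdots v_{6}\, \psi(v_{1}, \ldots, v_{6})
\;\equiv\; \forall v_{1} \cdots v_{6}\, \bigl(\chi(v_{1}, v_{2}) \wedge \psi(v_{1}, \ldots, v_{6})\bigr).
\]
Applying this identification simultaneously for all four constant definitions (each time reusing $v_{1}, v_{2}$) yields an equivalent $\exists^{*}\forall^{6}$-sentence in the pure multiplicative signature. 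Composing this transformation with the reduction from Hilbert's Tenth Problem given by the preceding theorem produces the desired many-one reduction of Diophantine satisfiability to the $\exists^{*}\forall^{6}$-fragment of $\langle \omega^{\omega^\lambda}; \times \rangle$, establishing its undecidability for every $\lambda \geq 1$.
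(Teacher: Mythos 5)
Your proposal is correct and follows essentially the same route as the paper: replace each constant by a fresh existentially quantified variable pinned down by its defining formula from Proposition \ref{prop:elemdef}, absorb the auxiliary existential witnesses into the outer $\exists^{*}$ block, and merge the $\forall^{2}$ prefixes of the constant definitions into the existing $\forall^{6}$ block by reusing $v_{1},v_{2}$ --- which is precisely the variable-reuse step the paper performs in its final displayed formula.
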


\begin{proof}
We show how we can remove the constants $1$, $\omega$, $(\omega+1)$
and $(\omega^{2}+1)$.
By Proposition \ref{prop:elemdef}, there exist quantifier-free predicates $\alpha, \beta, \gamma, \delta$
such that 
$$
\begin{array}{rll}
\{1\} &  \text{is definable in $\langle \omega^{\omega^\lambda}; \times \rangle$ by} &  \exists y\ \alpha(x,y)\\
\{\omega\} &  \text{is definable by}  &  \exists y \forall z,t\  \beta(x,y,z,t)\\
\{\omega+1 \} & \text{is definable by}  &   \exists y_1 \exists y_2 \forall z,t\  \gamma(x,y,z,t)\\
\{\omega^{2}+1\} & \text{is definable by}  &   \exists y_1 \exists y_2 \forall z,t\  \delta(x,y,z,t)\\
\end{array}
$$
Thus every formula
$$
\exists \overrightarrow{u}\ \forall \overrightarrow{v}\phi(x_{1}, \ldots, x_{p}, \overrightarrow{u}, \overrightarrow{v})
$$
over $\Omega$,   where $\overrightarrow{v}=(v_1,\dots,v_6)$ and $\phi$ is quantifier-free, is equivalent in 
$\langle \omega^{\omega^\lambda}; \times \rangle$ to the formula
$$
\begin{array}{l}
\exists a,b,c,d \ (  \exists y\ \alpha(a,y) \wedge  \exists y \forall z,t\  \beta(b,y,z,t)
\wedge \exists y_1 \exists y_2\forall z,t\  \gamma(c,y,z,t) \\
\quad
\wedge  \exists y_1 \exists y_2 \forall z,t\  \delta(d,y,z,t) \wedge 
\exists \overrightarrow{u}\ \forall \overrightarrow{v}\Phi(x_{1}, \ldots, x_{p}, \overrightarrow{u}, \overrightarrow{v}))
\end{array}
$$
where $\Phi$ is obtained from $\phi$ by substituting
$a$, $b$, $c$ and $d$ for every occurrence of the constants 
$1$, $\omega$, $\omega+1$ and $\omega^{2}+1$ and where 
$a,   b,  c,  d,  y, z, t$ are pairwise different variables 
which are also different from the components of $ \overrightarrow{u}$
and $\overrightarrow{v}$. 
By introducing four new variables $ z_{1},  z_{2},  z_{3},  z_{4}$
this formula is equivalent to
$$
\begin{array}{l}
\exists a,   b,  c,  d,  z_{1},  z_{2},  z_{3},  z_{4}, \overrightarrow{u}\\
  \quad ( \alpha(a,z_{1} ) \wedge    \forall z,t\  \beta(b,z_{2},z,t) \wedge   \forall z,t\  \gamma(c,z_{3},z,t) 
\wedge  \forall z,t\  \delta(d,z_{4},z,t)\\ 
\quad \wedge 
\forall \overrightarrow{v}\Phi(x_{1}, \ldots, x_{p}, \overrightarrow{u}, \overrightarrow{v}))
\end{array}
$$
which is equivalent to
$$
\begin{array}{l}
\exists a,   b,  c,  d,  z_{1},  z_{2},  z_{3},  z_{4}, \overrightarrow{u}\\
\quad \forall \overrightarrow{v}  \ ( \alpha(a,z_{1} ) \wedge \beta(b,z_{2},v_1,v_2) \wedge  \gamma(c,z_{3},v_1,v_2) 
\wedge \delta(d,z_{4},v_1,v_2)\\ 
\quad \wedge 
\Phi(x_{1}, \ldots, x_{p}, \overrightarrow{u}, \overrightarrow{v}))
\end{array}
$$
\end{proof}

\section{Final observation}
\label{s:final-observations}

Our results leave open the question of whether the existential fragment of $\Omega$ is decidable. By Section \ref{s:interpretation-in-N-divibililty}, a proof of decidability for 
the existential fragment of $\Omega$ would provide a new proof of decidability for the existential fragment of
$\langle \N; +, |\rangle$. We show that it would also provide a new proof of Makanin's result of decidability for word equations with constants \cite{mak77}. This suggests that a conceptually simple proof of decidability for the existential fragment of $\Omega$ may be difficult to obtain.

The relation between the problem we tackled and 
that of word equations with constants comes from the fact that the multiplicative monoid of the structure
$\Omega= $\str{\alpha; \times, 1, \omega, \omega+1, \omega^{2}+1} 
(where $\alpha=\omega^{\omega^{\lambda
	}}$ for some  $\lambda>0$) has an (infinitely generated)
	free submonoid, namely that generated by the infinite successor primes less than $\alpha$.

	Given a word equation
	with constants over a binary alphabet $\{a,b\}$ (which is no loss of generality
	because the unary case is trivial and the general case with more than one letter
	reduces to the binary case)

\begin{equation}
\label{eq:makanin}
L(x_{1}, \ldots, x_{n},a,b)= R(x_{1}, \ldots, x_{n},a,b),
\end{equation}
we define the following
conditions over $\Omega$
\begin{equation}
\label{eq:interpreting-makanin}
\begin{array}{l}
L'(x_{1}, \ldots, x_{n},(\omega+1), (\omega^{2}+1))= R'(x_{1}, \ldots, x_{n},(\omega+1), (\omega^{2}+1)) \wedge\\
(\omega+1) x_{1}\not= \omega x_{1} \wedge \ldots \wedge (\omega+1) x_{n}\not= \omega x_{n}
\end{array}
\end{equation}
where $L'$ (resp. $R'$) is obtained from $L$ (resp. $R$) by substituting 
$(\omega+1)$ and $(\omega^{2}+1)$ for $a$ and $b$.
Every solution $\theta(x_{i})= u_{i}\in \{a,b\}^*$ of \ref{eq:makanin}
yields a solution for \ref{eq:interpreting-makanin}, namely
$\theta'(x_{i})= f(u_{i})\in \alpha$ where $f$ substitutes 
$(\omega+1)$ and $(\omega^{2}+1)$ for every occurrence of $a$ and $b$, respectively.
Conversely, if $\theta'(x_{i})= \alpha_{i}\in \alpha$ is a solution 
for \ref{eq:interpreting-makanin}, then each inequality $(\omega+1) x_{i}\not= \omega x_{i}$
implies that $x_{i}$ is a successor. Now observe that the successors define a submonoid 
of $\alpha$. On this submonoid consider the mapping $g$ 
which maps every
$$x=a_{0}(\omega^{\mu_{1}} +1)a_{1}(\omega^{\mu_{2}} +1) \cdots a_{n-1}(\omega^{\mu_{n}} +1) a_{n}$$
where $\mu_i$'s are non null ordinals and the $a_i$'s are positive integers,
to $$g(x)=(\omega^{\nu_{1}} +1)(\omega^{\nu_{2}} +1) \cdots (\omega^{\nu_{n}} +1) 
$$ 
with $\nu_{i}=\min \{\mu_{i}, 2\}$. It is clearly a morphism and if $\theta'$ is
a solution of \ref{eq:interpreting-makanin}, so is $g\circ \theta'$. Composing with the morphism 
$h:\{(\omega  +1), (\omega^{2} +1) \}^*\rightarrow \{a,b\}$
defined by $h((\omega  +1))=a, h((\omega^{2}  +1))=b$, yields a solution 
of  \ref{eq:makanin} by setting $\theta(x_{i})= h(g(\theta'(x_{i}))$.

\end{document}